\def\rnum#1{\expandafter{\romannumeral #1}} 
\newenvironment{Proof}[2]{\removelastskip\vspace{6pt}\noindent
 {\bf Proof  #1.}~\rm#2}{\par\vspace{6pt}}
 \theoremstyle{plain}
\newtheorem{theorem}{Theorem}[section]
\newtheorem{proposition}[theorem]{Proposition}
\newtheorem{lemma}[theorem]{Lemma}
\newtheorem{assumption}[theorem]{Assumption}
\theoremstyle{remark}
\newtheorem{definition}[theorem]{\rm Definition}
\newtheorem{example}[theorem]{\rm Example}
\newtheorem{remark}[theorem]{\rm Remark}
\newcommand{\Z}{\mathbb{Z}}
\newcommand{\R}{\mathbb{R}}
\newcommand{\C}{\mathbb{C}}
\newcommand{\T}{\mathbb{T}}
\newcommand{\D}{\mathbb{D}}
\newcommand{\U}{\mathbf{U}}
\newcommand{\GL}{\mathbf{GL}}
\newcommand{\SSS}{\mathbb{S}}
\newcommand{\Vect}{\mathrm{Vect}}
\newcommand{\Fred}{\mathrm{Fred}}
\newcommand{\I}{\mathcal{I}}
\newcommand{\B}{\mathrm{B}}
\newcommand{\GP}{\mathrm{GP}}
\newcommand{\Bulk}{\mathrm{Bulk}}
\newcommand{\Edge}{\mathrm{Edge}}
\newcommand{\cpt}{\mathrm{cpt}}
\newcommand{\pt}{\mathrm{pt}}
\newcommand{\id}{\mathrm{id}}
\newcommand{\AII}{\mathrm{A\hspace{-.1em}I\hspace{-.1em}I}}
\newcommand{\tKSp}{\widetilde{KSp}\hspace{-.15em}\,^0}
\newcommand{\interior}[1]{%
  {\kern0pt#1}^{\mathrm{o}}%
}
\DeclareMathOperator{\qsf}{\mathrm{sf_2}}
\DeclareMathOperator{\ind}{\mathrm{ind}}
\DeclareMathOperator{\End}{\mathrm{End}}
\DeclareMathOperator{\Ker}{\mathrm{Ker}}
\DeclareMathOperator{\Coker}{\mathrm{Coker}}
\DeclareMathOperator{\rank}{\mathrm{rank}}
\begin{document}
%%%%%%%%%%%%%%%%%%%%%%%%%%%%%%%%%%%%%%%%%%%%%%%%%%%%%%%%%%%%%%%%%%%%

\title[Bulk-edge correspondence and the cobordism invariance]
{Bulk-edge correspondence and the cobordism invariance of the index}
\author{Shin Hayashi}
\address{Graduate School of Mathematical Sciences, University of Tokyo, 3-8-1 Komaba,
Tokyo, 153-8914, Japan.}
\address{Mathematics for Advanced Materials-OIL, AIST-Tohoku University, Sendai,
980-8577, Japan.}
\email{{\tt shin-hayashi@aist.go.jp}}

\keywords{topological insulators, bulk-edge correspondence, topological $K$-theory, index theory}
\subjclass[2010]{Primary 19K56; Secondary 81V70}

\begin{abstract}
We show that the bulk-edge correspondence for two-dimensional type A and type $\AII$ topological insulators follows directly from the cobordism invariance of the index.
\end{abstract}

\maketitle
\setcounter{tocdepth}{1}
\tableofcontents
%%%%%%%%%%%%%%%%%%%%%%%%%%%%%%%%%%%%%%%%%%%%%%%%%%%%%%%%%%%%%%%%%%%%

\section{Introduction}
\label{sec:1}
In condensed matter physics, a correspondence between two topological invariants, called the bulk-edge correspondence, is well-known.
In this paper, we study the bulk-edge correspondence from the point of view of $K$-theory and index theory.
We show that the bulk-edge correspondence for two-dimensional type A and type $\AII$ topological insulators follows directly from the cobodrism invariance of the index, which is a basic property of the index.

The bulk-edge correspondence was recognized by the theoretical study of the quantum Hall effect.
The quantum Hall effect was discovered by K. von Klitzing, G. Dorda and M. Pepper in $1980$ \cite{KDP80}.
They observed a quantization of the Hall conductance of a two-dimensional electron gas in a strong magnetic field.
Such quantization was explained by D. J. Thouless, M. Kohmoto, M. P. Nightingale and M. den Nijs from the topological point of view.
They considered such effect on an infinite system without edge, and showed that the Hall conductance of the system coincides with the first Chern number (called the TKNN number) of the Bloch bundle on the Brillouin torus by using the Kubo formula \cite{TKNN82,Ko85}.
B. I. Halperin considered such phenomena on a system with edge \cite{Hal82}. Y. Hatsugai showed that the Hall conductance of a system with edge corresponds to a winding number (or a spectral flow) counted on a Riemann surface \cite{Hat93a}.
The equality between the first Chern number defined for such infinite system without edge (called the {\em bulk index}) and the winding number defined for such system with edge (called the {\em edge index}) was proved by Hatsugai \cite{Hat93b} by using a Riemann surface.
This correspondence is called the {\em bulk-edge correspondence}.
In $2005$, C. L. Kane and E. J. Mele proposed the quantum spin Hall effect \cite{KM05a}.
They defined a $\Z_2$-valued topological invariant (called the Kane-Mele invariant) for two-dimensional systems with odd time-reversal symmetry, and distinguished the quantum spin Hall state and the insulating state \cite{KM05b}.
On the quantum spin Hall state, there is a spin current carried on the edge.
This is the bulk-edge correspondence for the quantum Hall system.
The quantum spin Hall effect was observed experimentally by M. K\"{o}nig {\em et al}. \cite{Ko07} based on the model proposed by B. A. Bernevig T. L. Hughes and S.-C. Zhang \cite{BHZ06}.

%%%%%%%%%%%%%%%%%%%%%
By the pioneering work of J. Bellissard, a mathematical approach to the quantum Hall effect from the point of view of the noncommutative geometry was developed \cite{Be86,BvES94}.
J. Kellendonk, T. Richter and H. Schulz-Baldes gave a proof of the bulk-edge correspondence in such a way that disordered systems were also treated.
They used $K$-theory for $C^*$-algebras, especially the boundary homomorphism of the six-term exact sequence associated to the Toeplitz extension \cite{HKR00,HKR02}.
Since they used the pairing of $K$-theory and cyclic cohomology in the definition of bulk and edge invariants, this method cannot be applied to $\Z_2$-valued invariants as the quantum spin Hall effect.
C. Bourne, A. L. Carey, Kellendonk and A. Rennie overcome such difficulty  by using Kasparov's $KK$-theory \cite{BCR1,BCR2,BKR16}.
Y.~Kubota studied topological phases and the bulk-edge correspondence from the point of view of the coarse geometry.
He used the coarse Mayer-Vietoris exact sequence to prove the bulk-edge correspondence \cite{Ku15}.
Their methods are also applied to any symmetry and disordered systems.

%%%%%%%%%%%%%%%%%%%%%
G. M. Graf and M. Porta gave a proof of the bulk-edge correspondence in \cite{GP13} by using another vector bundle. 
They defined a vector bundle over some torus for a system without edge by using (formal) solutions of the Hamiltonian which decays as it goes to a specific direction.
This vector bundle is closely related to the edge index.
By using this bundle, they showed the bulk-edge correspondence for the quantum spin Hall system.
They also gave a proof of the bulk-edge correspondence for the quantum Hall system.
Note that such vector bundles consisting of decaying solutions were also considered in \cite{Th83}.
Since this vector bundle is defined without using the translation invariance in the direction across the edge, disordered systems are also be treated (the translation invariance along the edge is assumed).
They called topological invariants for such disordered systems obtained by using such vector bundles bulk indices in the sense that these invariants are defined by the bulk Hamiltonian and do not use the Dirichlet boundary condition.
Graf--Porta showed that such topological invariant defined by decaying solutions  equals to the spectral flow.
For periodic systems, they also showed that such topological invariant equals to the bulk index (the Kane-Mele invariant or the TKNN number) by some consideration on a Riemann surface.
Graf--Porta's proof is based on the functional analysis and the basic homotopy theory.
Graf and Porta gave in \cite{GP13} another proof of the bulk-edge correspondence for quantum Hall systems based on Levinson's theorem in scattering theory.

%%%%%%%%%%%%%%%%%%%%%
There are many other works about the bulk-edge correspondence, especially from the point of view of $K$-theory and index theory.
Avron--Seiler--Simon considered an index theorem for a pair of projections in this context \cite{ASS94}, and Elbau--Graf proved the bulk-edge correspondence for the quantum Hall system by generalizing their method \cite{EG02}.
In \cite{ASV13} Avila--Schulz-Baldes--Villegas-Blas proved the bulk-edge correspondence for quantum spin Hall systems by using transfer matrix methods as in \cite{Hat93b}.
Thiang--Varghese used T-duality to simplify the bulk-edge correspondence. The case of systems with symmetry are also treated in their paper with K. C. Hannabuss \cite{MT15,MT16a,MT16b,HMT16}.
In \cite{KLW15b}, Kaufmann--Li--Wehefritz-Kaufmann studied the $\Z_2$-valued invariant from the point of view of $K$-theory and index theory.
In \cite{KLW15b}, a proof of the bulk-edge correspondence for topological insulators is given by using the Baum--Connes isomorphism.
Note that $K$-theory is also used in the classification of topological phases \cite{Kit09,FM13,Kel15,Thi16,Ku16}.

%%%%%%%%%%%%%%%%%%%%%
In this paper, we study Graf--Porta's proof of the bulk-edge correspondence to use another vector bundle, from the point of view of $K$-theory and index theory.
We give a proof of the bulk-edge correspondence for some two-dimensional type A and type $\AII$ topological insulators in the Altland-Zirnbauer classification \cite{AZ97}.
Our proof is based on the cobordism invariance of the index.
%%%%%%%%%%%%%%%%%%%%%
The proof goes as follows. We first construct two elements of some compactly supported $K$-groups, and see that their family indices are the $K$-class of the Bloch bundle (Lemma \ref{lem4}) and the $K$-class of the difference of the class of vector bundle consisting of decaying solutions and that of a trivial bundle (Lemma \ref{lem5}), respectively.
We then construct a cobordism between them.
The cobordism is given by an element of a compactly supported $K$-group whose support is a Riemann surface considered in \cite{GP13}.
By using the cobordism invariance of the index, we obtain a relation between the bulk index and the first Chern number of the vector bundle consisting of decaying solutions.
We next show a relation between the latter invariant and the edge index by using the localization of $K$-class and the excision property of the index.
This part is just a computation of some element of a $K$-group.
%%%%%%%%%%%%%%%%%%%%%

There are two distinct approaches known for the bulk-edge correspondence: one is based on functional analysis (\cite{GP13}, for example) and the other is based on $K$-theory and noncommutative geometry (\cite{HKR02}, for example).
Historically speaking, M.~F. Atiyah and I.~M. Singer's index theorem \cite{AS1} was stated by using the framework of topological $K$-theory and gave one perspective to previously known topological invariants, and noncommutative geometry widely extends this perspective.
In this paper, we study Graf-Porta's functional analytic proof from the viewpoint of the topological $K$-theory and the classical Atiyah-Singer's index theorem.
In this way, we can reconcile these two approaches.

%%%%%%%%%%%%%%%%%%%%%%%%%%%%%%%%%%%%%%%%%%
\section{Preliminaries}
\label{sec:2}
In this section, we collects some basic facts about topological $K$-theory (complex $K$-theory and $KSp$-theory) needed in this paper.
A definition of $\Z$-valued spectral flow and $\Z_2$-valued spectral flow is contained.
For a precise definition of the spectral flow, we find it convenient to follow the definition given by J. Phillips \cite{Ph96}.
For the later use, we give here an explicit construction of the isomorphism $\tKSp(\SSS^{2,1}) \cong \Z_2$.
%%%%%%%%%%%%%%%%%%%%%
\subsection{Topological $K$-theory}
\label{sec:2.1}
\subsubsection{Complex $K$-theory}
\label{sec:2.1.1}
Here we collect some basic facts about complex $K$-theory.
Details can be found in \cite{At67,AS1,Se68,Ka78,LM89}.
Let $X$ be a compact Hausdorff space. The isomorphism classes of finite rank complex vector bundles over $X$ makes an abelian semigroup by taking direct sum as its binary operation. We define the $K$-group $K^0(X)$ for $X$ as the Grothendieck group of this abelian semigroup. We denote the class of a vector bundle $E$ by $[E]$. $K^0(X)$ is a commutative ring with unit by taking tensor product as its multiplication.
For a finite dimensional complex vector space $V$, we write $\underline{V}$ for the product bundle $X \times V$ over $X$.\footnote{Note that the notation $\underline{V}$ can mean product bundles over different base spaces. We use this notation for simplicity. Its base space will be clear from the context.}
Let $x$ be a point of $X$. We define the reduced $K$-group for a based space $(X,x)$ by $\widetilde{K}^0(X) := \Ker(i^* \colon K^0(X) \rightarrow K^0(\{ x \}))$, where $i \colon \{ x \} \hookrightarrow X$ is a natural inclusion.
Let $A$ be a closed subspace of $X$. We define the $K$-group for a pair $(X,A)$ by $K^0(X, A) := \widetilde{K}^0(X/A)$, where we take $A/A$ as a base point of $X/A$.
For a locally compact Hausdorff space $Y$, we take its one-point compactification $Y^+$. Then $(Y^+, +)$ is a based compact Hausdorff space. We define $K^0_\cpt(Y) := \widetilde{K}^0(Y^+)$, which is called the compactly supported $K$-group for $Y$.

Let $n$ be a positive integer. We define $K^{-n}(X, A) := \widetilde{K}^0(\sum^n(X/A))$ where $\sum^n(X/A)$ is the $n$-fold reduced suspension.
Let $\emptyset$ be the empty set, then we have $K^{-n}(X) = K^{-n}(X, \emptyset)$.
We have the following long exact sequence.
\begin{equation*}
	\cdots \rightarrow K^{-1}(X, A) \rightarrow K^{-1}(X) \rightarrow K^{-1}(A) \overset{\partial}{\rightarrow} K^0(X, A) \rightarrow K^0(X) \rightarrow K^0(A).
\end{equation*}

There is an alternative description of $K$-groups. $K^0_\cpt(Y)$ can be defined as an equivalence classes of the isomorphism classes of triples $(E, F ; f)$, where $E$ and $F$ are finite rank complex vector bundles over $Y$ and $f \colon E \rightarrow F$ is a bundle homomorphism invertible outside a compact set. Its equivalence relation is generated by stabilization and homotopy. We denote its class by $[E, F; f]$.

Let $\GL(\infty, \C)$ be the inductive limit of a sequence $\GL(1, \C) \rightarrow \GL(2, \C) \rightarrow \cdots$, where $\GL(n, \C) \rightarrow \GL(n+1,\C)$ is given by $\mathbf{A} \mapsto \mathrm{diag}(\mathbf{A}, \bm{1})$. Then we have a natural isomorphism $[X, \GL(\infty, \C)] \cong K^{-1}(X)$.
We denote by $[f]$ the homotopy class of a continuous map $f \colon X \rightarrow \GL(\infty, \C)$.

Let $H$ be a separable complex Hilbert space. Let the space $\mathrm{Fred}(H)$ the space of bounded linear Fredholm operators on $H$ with norm topology.
Then there is a bijection $\mathop{\mathrm{index}} \colon [X, \mathrm{Fred}(H)] \rightarrow K^0(X)$ given by taking the family index, which was shown by Atiyah and K. J{\"a}nich independently.
We refer the reader to \cite{At67} for the construction of this map.
Instead of explaining the construction, we note that, if a continuous map $f \colon X \rightarrow \mathrm{Fred}(H)$ consists of a family $f(x)$ of Fredholm operators whose dimension of kernels are constant, then $\mathop{\mathrm{index}}(f) = [\Ker(f)] - [\Coker(f)]$, where $\Ker(f)$ and $\Coker(f)$ are vector bundles over $X$ whose fibers at a point $x$ in $X$ are $\Ker(f(x))$ and $\Coker(f(x))$ respectively.

Let $E$ be an even dimensional spin$^c$ vector bundle on $Y$. Then we have an isomorphism $K^0_\cpt(Y) \rightarrow K^0_\cpt(E)$, called the {\em Thom isomorphism}, given by taking a cup product with the Thom class associated to the spin$^c$ structure of $E$.
If $E = Y \times \C$ and if we consider for $E$ a spin$^c$ structure naturally induced by the complex structure of $E$, the Thom isomorphism (or the Bott isomorphism) $K^0_\cpt(Y) \rightarrow K^0_\cpt(Y \times \C)$ is given by taking a cup product with the Bott class $[\underline{\C}, \underline{\C}; z] \in K^0_\cpt(\C)$. 

Let $X$ and $Y$ be manifolds possibly with boundary and let $f \colon X \rightarrow Y$ be a neat embedding.\footnote{An embedding $f \colon X \hookrightarrow Y$ is said to be {\em neat} if $X \cap \partial Y = \partial X$ and the space $T_xX$ is not contained in $T_x(\partial Y)$ for any point $x \in \partial X$ (see \cite{Hir94}).}
Let $N$ be the normal bundle of this embedding. We assume that $N$ is even dimensional and equipped with a spin$^c$ structure.
We take a tubular neighborhood of the embedded manifold, and identify it with $N$.
Then we can define a {\em push-forward map} $f_! \colon K^0_\cpt(X) \rightarrow K^0_\cpt(Y)$ by the composition of the Thom isomorphism $K^0_\cpt(X) \rightarrow K^0_\cpt(N)$ and the map $K^0_\cpt(N) \rightarrow K^0_\cpt(Y)$ induced by the collapsing map $Y^+ \rightarrow Y^+/(Y^+ - N) \cong N^+$. The push-forward map $f_!$ is independent of the choice of a tubular neighborhood.

Let $X$ be an even-dimensional spin$^c$ manifold without boundary. We take an embedding $j$ of $X$ into an even dimensional Euclidean space $\R^{2n}$.
We fix a spin$^c$ structure on $\R^{2n}$.
Then the normal bundle $N$ of this embedding has a naturally induced spin$^c$ structure, and we have a push-forward map $j_! \colon K^0_\cpt(X) \rightarrow K^0_\cpt(\R^{2n})$.
We take an orientation preserving linear isomorphism $\R^{2n} \cong \C^n$, and consider the inverse of Thom isomorphism $\beta^{-1} \colon K^0_\cpt(\R^{2n}) \cong K^0_\cpt(\C^n) \rightarrow K^0(\{ \pt\}) \cong \Z$.

\begin{definition}
We define a homomorphism $\ind_{X} \colon K^0(X) \rightarrow \Z$ by $[E] \mapsto \beta^{-1} \circ j_! ([E] \bm{\otimes} [L(X)])$, where $L(X)$ is the determinant line bundle associated to the spin$^c$ structure of $X$.\footnote{The determinant line bundle $L(X)$ associates to the principal spin$^c$ bundle of the spin$^c$ structure by the homomorphism ${\bf Spin^c}(2n) = ({\bf Spin}(2n) \times {\bf U}(1))/\{ \pm 1 \} \ni [\lambda, z] \mapsto z^2 \in {\textbf U}(1)$.}
\end{definition}

\begin{remark}\label{rem1}
The map $\ind_X$ coincides with the composite of
the Thom isomorphism $K^0(X) \rightarrow K^0_\cpt(TX)$ associated to the spin$^c$ structure of $X$, and the {\em topological index} $K^0_\cpt(TX) \rightarrow \Z$ defined by Atiyah and Singer in \cite{AS1}.\footnote{There exists different sign conventions in $K$-theory. If we use another conventions, the expression of $\ind_X$ will be different, for example. In this paper, we follow the one used in \cite{AS1}.}
\end{remark}

%%%%%%%%%%%%%%%%%%%%%%%%%%%%%%%%%%%%%%%%%%%%%%%%%%%%%%%%%%
\subsubsection{$KSp$-theory}
We here collect a definition and basic properties of $KSp$-theory.
The definition of Real vector bundle is also included for the later use.
We refer the reader to \cite{At66,Du69,NG2,NG1} for the details.
Let $X$ be a topological space.
A homeomorphism $\tau_X \colon X \rightarrow X$ of period $2$ (i.e. $\tau_X^2 = \id_X$) is called an involution on $X$.
We call the pair $(X, \tau_X)$ an {\em involutive space}.
Let $X^\tau$ be the fixed point set of this involution.

\begin{example}
Let $\R^{p,q}$ be the involutive space $\R^q \oplus i\R^p$ whose involution\footnote{
Note that there are two different but well-used notations for this space.
Our involutive space $\R^{p,q}$ is denoted by ``$\R^{q,p}$'' in \cite{NG2,NG1}, for example.
This leads to different notations for the following involutive spaces $\D^{p,q}$ and $\SSS^{p,q}$.
We here follow Atiyah's original notation in \cite{At66}.
}
 is given by $\id$ on the first $\R^q$ and $-\id$ on the second $\R^p$.
We also denote by $\D^{p,q}$ the unit ball in $\R^{p,q}$ and $\SSS^{p,q}$ for the unit sphere in $\R^{p,q}$ with induced involutions.
Note that the involutive space $\SSS^{p,q}$ is topologically $\SSS^{p+q-1}$.
\end{example}

\begin{definition}
A pair $(E, \Theta)$ is called a {\em Quaternionic} (resp. {\em Real}) {\em vector bundle} over the involutive space $(X, \tau_X)$,
if $E$ is a complex vector bundle over the space $X$ and $\Theta \colon E \rightarrow E$ is a (topological) homeomorphism which satisfies,
(\rnum{1}) the projection $E \rightarrow X$ is $\Z_2$-equivariant,
(\rnum{2}) $\Theta \colon E_x \rightarrow E_{\tau_X(x)}$ is anti-linear and
(\rnum{3}) $\Theta^2 = -\id_E$ (resp. $\Theta^2 = \id_E$).
\end{definition}
A homomorphism between Quaternionic (resp. Real) vector bundles $(E, \Theta_E)$ and $(F, \Theta_F)$ over $(X, \tau_X)$ is
a homomorphism $\varphi \colon E \rightarrow F$ of complex vector bundles which satisfies $\varphi \circ \Theta_E = \Theta_F \circ \varphi$.

\begin{example}[Product bundle]
Let $(X, \tau_X)$ be an involutive space.
Let $\mathbb{H} = \C \oplus \C j$ be the quaternions. We consider $\C^{2m} = \mathbb{H}^m$. Then the projection onto the first component $\underline{\C^{2m}} = X \times \C^{2m} \rightarrow X$ is a complex vector bundle.
Consider an anti-linear map $\Theta_0$ on the product space $X \times \C^{2m}$ given by $\Theta_0(x, v) = (\tau_X(x), jv)$.
Then the pair $(\underline{\C^{2m}}, \Theta_0)$ is a Quaternionic vector bundle over $(X, \tau_X)$.
\end{example}

We now assume $X$ to be a compact Hausdorff space.
The isomorphism classes of Quaternionic vector bundles\footnote{We consider only such bundles of finite rank underlying complex vector bundles.} $\Vect_Q(X, \tau_X)$ over the involutive space $(X, \tau_X)$ makes an abelian monoid by taking direct sum as its binary operation.
We define the $KSp$-group $KSp^0(X,\tau_X)$ for $(X, \tau_X)$ as the Grothendieck group of this abelian monoid.\footnote{$KSp$-groups are sometimes called $KQ$-groups \cite{NG1}. In this paper, we follow the original notation by Dupont \cite{Du69}. Note that $KSp^0(X, \tau_X)$ is naturally isomorphic to $KR^{-4}(X, \tau_X)$.}
We denote by $[(E, \Theta)]$ the class of a Quaternionic vector bundle $(E, \Theta)$ over $(X, \tau_X)$ in $KSp^0(X, \tau_X)$.
As in the case of $K$-groups, we can define
the reduced $KSp$-group $\tKSp(X, \tau_X)$ for an involutive space $((X, \tau_X), x)$ with a base point $x \in X^{\tau}$,
the $KSp$-group $KSp^0((X, \tau_X), (A,\tau_X))$ for a pair of $(X, \tau_X)$ and its closed subspace $A$ with $\tau_X(A) = A$,
and the compactly supported $KSp$-group $KSp^0_\cpt(Y, \tau_Y)$ for a locally compact Hausdorff involutive space $(Y, \tau_Y)$.
Let $n$ be a positive integer. For such a pair $(X, \tau_X)$ and $(A, \tau_X)$, we define,
\begin{equation*}
KSp^{-n}((X, \tau_X), (A, \tau_X)) := KSp^0((X, \tau_X) \times \D^{0,n}, (X, \tau_X) \times \SSS^{0,n} \cup (A, \tau_X) \times \D^{0,n}).
\end{equation*}
Then there is a long exact sequence associated to $((X, \tau_X), (A, \tau_X))$.
By the {\em Bott periodicity theorem} \cite{At66}, we have the following natural isomorphism,
\begin{equation*}
	KSp^0_\cpt(Y, \tau_Y) \cong KSp^0_\cpt((Y, \tau_Y) \times \R^{1,1}).
\end{equation*}
There also is an alternative description of $KSp$-groups. A pair of Quaternionic vector bundles $(E, \Theta_E)$ and $(F, \Theta_F)$ over $(Y,\tau_Y)$ and a ($\Z_2$-equivariant) bundle homomorphism $f \colon (E, \Theta_E) \rightarrow (F, \Theta_F)$ which is invertible outside a compact set defines an element of the $KSp$-group $KSp^0_\cpt(Y,\tau_Y)$. We denote its class by $[(E, \Theta_E),$ $(F, \Theta_F); f]$.

Let $(H, \Theta)$ be a pair of a separable complex Hilbert space $H$ and an anti-linear map $\Theta \colon H \rightarrow H$ which satisfies $\Theta \Theta^* = \id$ and $\Theta^2 = -\id$.
The map $\Theta$ induces an involution $\tau_\Theta$ on $B(H)$ given by $\tau_\Theta(G) = \Theta G \Theta^*$.
$\tau_\Theta$ is closed on $\Fred(H)$ and the pair $(\Fred(H), \tau_\Theta)$ is an involutive space.
Then we have a natural bijection $\mathrm{index} \colon [(X, \tau_X), (\Fred(H), \tau_\Theta)]_{\Z_2} \rightarrow KSp^0(X, \tau_X)$ where $[ - , - ]_{\Z_2}$ is the homotopy classes of $\Z_2$-equivariant maps \cite{Se69,Ma71}.
Let $\C^{2m} = \mathbb{H}^m$. The action of $j$ induces an involution $\tau_j$ on $\GL(2, \C)$. By taking the inductive limit of a sequence $(\GL(2, \C), \tau_j) \rightarrow (\GL(4, \C), \tau_j) \rightarrow \cdots$, we have an involutive space $(\GL(\infty, \C), \tau_j)$.
Then we have a natural isomorphism $[(X, \tau_j), (\GL(\infty, \C), \tau_j)]_{\Z_2} \cong KSp^{-1}(X, \tau_X)$.

Let $(X, \tau_X)$ and $(Y, \tau_Y)$ be involutive manifolds possibly with boundary.
We assume that its boundaries $\partial X$ and $\partial Y$ are closed under involutions $\tau_X$ and $\tau_Y$, respectively.
We further assume that there is a $\Z_2$-equivariant neat embedding $f \colon (X, \tau_X) \rightarrow (Y, \tau_Y)$ whose normal bundle $(N, \tau_N)$ is a trivial bundle of fiber $\R^{n,n}$.
We identify the normal bundle $(N, \tau_N)$ with a ($\Z_2$-equivariant) tubular neighborhood of this embedding.
\begin{definition}
We define the map $f^{\AII}_! \colon (X, \tau_X) \rightarrow (Y, \tau_Y)$
by the composition of the Bott periodicity isomorphism $KSp^0_\cpt(X, \tau_X) \rightarrow KSp^0_\cpt((X, \tau_X) \times \R^{n,n}) \cong KSp^0_\cpt(N, \tau_N)$ and the map $\mathrm{ext} \colon KSp^0_\cpt(N, \tau_N) \rightarrow KSp^0_\cpt(Y, \tau_Y)$ induced by the collapsing map $Y^+ \rightarrow Y^{+}/(Y^+ - N) \cong N^+$.
\end{definition}
%%%%%%%%%%%%%%%%%%%%%%%%%%%%%%%%%%%%%%%%%%%%%%%%%%%%%%%%%%

\subsubsection{An explicit construction of the isomorphism $\tKSp(\SSS^{2,1}) \cong \Z_2$}
\label{sec:2.1.3}
We first note that the reduced $KSp$-group $\tKSp(\SSS^{2,1})$ is calculated as follows,
\begin{equation*}
\tKSp(\SSS^{2,1}) \cong KSp^0(\D^{2,0}, \SSS^{2,0})\cong \Z_2.
\end{equation*}
Thus there is the unique group isomorphism between $\tKSp(\SSS^{2,1})$ and $\Z_2$.
In this subsection, we give an explicit construction of this map and give an explicit expression of the non-zero element in $\tKSp(\SSS^{2,1})$.
Let $(E, \Theta)$ be a Quaternionic vector bundle over the involutive space $\SSS^{2,1}$.
The space $\SSS^{2,1}$ has two fixed point sets of its involution for which we write $0$ and $\infty$.
Since $\SSS^{2,1}$ has fixed points, the complex rank of $E$ is even, for which we write $2n$.
We consider a (complex) determinant $\det E$. Then $\Theta$ induces an anti-linear involution $\det \Theta$ on $\det E$ in a natural way, and the pair $(\det E, \det \Theta)$ is a Real vector bundle over $\SSS^{2,1}$.
Let us divide $\SSS^{2,1} = \D^{2,0}_0 \cup_{\SSS^{2,0}} \D^{2,0}_\infty$ into an upper hemisphere $\D^{2,0}_0 \ni 0$ and a lower hemisphere $\D^{2,0}_\infty \ni \infty$.
Since $\D^{2,0}$ is $\Z_2$-equivariant contractible, restrictions of $(E, \Theta)$ and $(\det E, \det \Theta)$ onto $\D^{2,0}_0$ and $\D^{2,0}_\infty$ are trivial (see \cite{NG2} \cite{NG1}, for example).
Since $Sp(n)$ is connected, the bundle $(E, \Theta)$ has trivializations on $0$ and $\infty$ which are unique up to homotopy, that is,
\begin{equation*}
	(E_0, \Theta) \cong \{ 0 \} \times \mathbb{H}^n, \quad (E_\infty, \Theta) \cong \{ \infty \} \times \mathbb{H}^n.
\end{equation*}
Above trivializations induce canonical trivializations on $(\det E_0, \det \Theta)$ and $(\det E_\infty,$ $\det \Theta)$ which are unique up to homotopy.
These trivializations extend to trivializations of $(\det E, \det \Theta)$ on $\D^{2,0}_0$ and $\D^{2,0}_\infty$, respectively.
Such trivializations are unique up to homotopy.
By using these trivializations, we obtain a clutching function $f \colon \SSS^{2,0} \rightarrow (\U(1), \tau)$ of the bundle $E$,
where $\tau(z) = \bar{z}$. It is easily checked that $f$ is a $\Z_2$-equivariant continuous map.
Thus we have a map,
$\mathrm{Vect}_Q(\SSS^{2,1}) \rightarrow [\SSS^{2,0}, (\U(1), \tau)]_{\Z_2}$.
We give a group structure on $[\SSS^{2,0}, (\U(1), \tau)]_{\Z_2}$ by using the multiplication on $\U(1)$.
Since $\det(f \oplus g) = \det f \cdot \det g$, the above map is a monoid homomorphism.
Let $f \colon \SSS^{2,0} \rightarrow (\U(1), \tau)$ be a $\Z_2$-equivariant map.
Then there exists some $t \in \SSS^{2,0}$ such that $f(t) \in \{ \pm 1 \}$.
We choose a path $l$ in $\SSS^{2,0}$ which connects $t$ and $-t$, and fix an orientation on $l$ and $\U(1)$.
Since $f(-t) = \overline{f(t)} = f(t)$, the map $f|_l \colon l \rightarrow \U(1)$ defines a loop in $\U(1)$, and we consider the winding number for this loop.
It is easily checked that the mod $2$ of this winding number (for which we write $w(f) \in \Z_2$) does not depend on the choice made.
Thus we have a map
$w \colon [\SSS^{2,0}, (\U(1), \tau)]_{\Z_2} \rightarrow \Z_2$.
It is easily checked that the map $w$ is a group homomorphism.
By the composition of the above maps, we obtain a monoid homomorphism $\mathrm{Vect}_Q(\SSS^{2,1}) \rightarrow \Z_2$.
Thus by the universality of the Grothendieck construction, we obtain a group homomorphism,
$KSp^0(\SSS^{2,1}) \rightarrow \Z_2$.
It is clear by construction that this map maps classes of product bundles to zero.
Thus we have a group homomorphism $B \colon \tKSp(\SSS^{2,1}) \rightarrow \Z_2$.

\begin{example}\label{localizationexample}
On $\C^2$, we consider a left $\mathbb{H}$-module structure given by,
\begin{equation*}
	I = \left( 
    \begin{array}{cc}
           i&0\\
           0&i
    \end{array}
\right),
\quad
	J = \left( 
    \begin{array}{cc}
           0 & -\tau\\
           \tau & 0
    \end{array}
    \right),
\end{equation*}
where $\tau$ is the complex conjugation.
The involution $\tau_J$ on $M(2, \C)$ is given by
\begin{equation*}
	\tau_J \left( 
    \begin{array}{cc}
           x&y\\
           z&w
    \end{array}
\right)
	=
		J \left( 
    \begin{array}{cc}
           x&y\\
           z&w
    \end{array}
\right) J^*
		=
			\left( 
    \begin{array}{cc}
           \bar{w}&-\bar{z}\\
           -\bar{y}&\bar{x}
    \end{array}
\right).
\end{equation*}

We now identify $[0, 2\pi]/\{ 0, 2\pi \}$ with $\SSS^{2,0}$ by $\theta \mapsto e^{i \theta}$. Then the involution on $\SSS^{2,0}$ is translated into the involution on $[0, 2\pi]/\{ 0, 2\pi \}$ given by $\theta \mapsto \theta \pm \pi$.
Let $k$ be an integer.
We define a map $g_k \colon [0, 2\pi]/\{ 0, 2\pi \} \rightarrow (M(2, \C), \tau_J)$ by,
\begin{equation*}
	g_k(\theta) =
\left\{
\begin{aligned}
\mathrm{diag}(e^{2ik\theta}, 1)
 & \qquad \text{if} \quad 0 \leq \theta \leq \pi,   \\
 \mathrm{diag}(1, e^{-2ik\theta})
 & \qquad \text{if} \quad \pi \leq \theta \leq 2\pi.
\end{aligned}
\right.
\end{equation*}
The map $g_k$ is $\Z_2$-equivariant
and the triple $[(\underline{\C^2}, J), (\underline{\C^2}, J) ; g_k]$ defines an element in $KSp^0(\D^{2,0}, \SSS^{2,0})$.
The determinant of $g_k$ is,
\begin{equation*}
	\det g_k (\theta) =
\left\{
\begin{aligned}
e^{2ik\theta} & \qquad \text{if} \quad 0 \leq \theta \leq \pi,  \\
e^{-2ik\theta} & \qquad \text{if} \quad \pi \leq \theta \leq 2\pi,
\end{aligned}
\right.
\end{equation*}
which is $\Z_2$-equivariant.
It is easily checked that $w(\det g_k) = k \mod 2 \in \Z_2$.
\end{example}

By this example, the following holds.
\begin{proposition}\label{isom}
	The map $B \colon \tKSp(\SSS^{2,1}) \rightarrow \Z_2$ is a group isomorphism.
\end{proposition}

%%%%%%%%%%%%%%%%%%%%%%%%%%%%%%%%%%%%%%%%%%%%%%%%%%%%%%%%%%
\subsection{Spectral flow}
\label{sec:2.2}
In this subsection, we give a definition of the $\Z$-valued spectral flow and the $\Z_2$-valued spectral flow .
They are used to define edge indices for type A and type $\AII$ topological insulators, respectively.
%%%%%%%%%%%%%%%%%%%%%%%%%%%%%%%%%%%%%%%%%%%%%%%%%%%%%%%%%%
\subsubsection{$\Z$-valued spectral flow}
The $\Z$-valued spectral flow is, roughly speaking, the net number of crossing points of eigenvalues of the family of self-adjoint Fredholm operators with zero.\footnote{Counted positively for increasing eigenvalue crossings with respect to the parameter.}
\begin{definition}[\cite{Ph96}]\label{sf}
Let $\Fred(H)^{s.a.}$ be the space of self-adjoint Fredholm operators on a fixed separable complex Hilbert space $H$ with norm topology.
Let $B \colon [-\pi,\pi] \rightarrow \Fred(H)^{s.a.}$ be a continuous path.
We choose a partition $-\pi = t_0 < t_1 < \cdots < t_n = \pi$ and positive numbers $c_1, c_2, \ldots, c_n$ so that for each $i = 1, 2, \ldots, n$, the function $t \mapsto \chi_{[-c_i, c_i]}(B_t)$ is continuous and finite rank on $[t_{i-1}, t_i]$, where $\chi_{[a, b]}$ denotes the characteristic function of $[a, b]$.
We define the spectral flow of $B$ by
\begin{equation*}
	\mathrm{sf}(B) := \sum^{n}_{i=1} ( \rank_\C(\chi_{[0, c_i]}(B_{t_i})) - \rank_\C(\chi_{[0, c_i]}(B_{t_{i-1}})) \in \Z.
\end{equation*}
\end{definition}
The spectral flow is independent of the choice made and depends only on the homotopy class of the path $B$ leaving the endpoints fixed.
%%%%%%%%%%%%%%%%%%%%%%%%%%%%%%%%%%%%%%%%%%%%%%%%%%%%%%%%%%
\subsubsection{$\Z_2$-valued spectral flow}
Let $(H, \Theta)$ be a pair of a separable complex Hilbert space $H$ and an anti-linear operator $\Theta \colon H \rightarrow H$ which satisfies $\Theta \Theta^* = \id_H$ and $\Theta^2 = -\id_H$.
Note that the space of self-adjoint Fredholm operators $\Fred(H)^{s.a.}$ on $H$ is closed under the involution $\tau_\Theta$.
We regard the interval $[-\pi,\pi]$ as an involutive space by the involution $\tau_I(t) = -t$. We define, for a $\Z_2$-equivariant continuous map $B \colon ([-\pi,\pi], \tau_I) \rightarrow (\Fred(H)^{s.a.}, \tau_\Theta)$, a $\Z_2$-valued spectral flow.
\begin{definition}\label{qsf}
Let $B \colon ([-\pi,\pi], \tau_I) \rightarrow (\Fred(H)^{s.a.}, \tau_\Theta)$ be a $\Z_2$-equivariant continuous map.
We choose a partition $0 = t_0 < t_1 < \cdots < t_n = \pi$ of $[0,\pi]$ and positive numbers $c_1, c_2, \ldots, c_n$ so that for each $i = 1, 2, \ldots, n$, the function $t \mapsto \chi_{[-c_i, c_i]}(B_t)$ is continuous and finite rank on $[t_{i-1}, t_i]$.
We then define the {\em $\Z_2$-valued spectral flow} $\qsf(B)$ of $B$ as follows.
\begin{equation*}
	\qsf(B) := \sum^{n}_{i=1} ( \rank_\C(\chi_{[0, c_i]}(B_{t_i})) + \rank_\C(\chi_{[0, c_i]}(B_{t_{i-1}})))
	\hspace{2mm} \bmod{2} \quad \in \Z_2.
\end{equation*}
\end{definition}
The well-definedness and the homotopy invariance\footnote{$\Z_2$-valued spectral flow is invariant for a $\Z_2$-equivariantly homotopic path leaving the endpoints fixed or leaving these points in $(\Fred(H)^{s.a.})^{\tau_\Theta}$. This results from Kramers degeneracy.}
 of $\qsf(B)$ are also be proved easily in the same way as \cite{Ph96}.
Note that De Nittis--Schulz-Baldes considered in \cite{NSB15} such $\Z_2$-valued spectral flow for more general class of operators and symmetries.
$\Z_2$-valued spectral flow presented here is different from the one considered in \cite{CPS16}.
Our concrete description here is just for the later use.
For such a path $B \colon ([-\pi,\pi], \tau_I) \rightarrow (\Fred(H)^{s.a.}, \tau_\Theta)$, its spectrum is symmetric with respect to the involution $\tau_I$, that is, $\mathrm{sp}(B_t) = \mathrm{sp}(B_{-t})$.
By this symmetry, crossing points of eigenvalues with zero appear as pairs, and $\qsf(B)$ is mod $2$ of the net number of such pairs with multiplicity.

\begin{example}
Let $B$ be a family of linear operators on $\C^2$ given by $B_t = \left(
\begin{array}{cc}
           t&0\\
           0& -t
\end{array}
\right)$ for $t \in [-\pi, \pi]$.
Let $\Theta = \left(
\begin{array}{cc}
           0&-\tau\\
           \tau& 0
\end{array}
\right)$, where $\tau$ is the complex conjugation on $\C$.
Then its $\Z_2$-valued spectral flow is defined and it can be easily checked that $\qsf(B) = 1$.
\end{example}
%%%%%%%%%%%%%%%%%%%%%%%%%%%%%%%%%%%%%%%%%%%%%%%%%%%%%%%%%%

\section{Setup}
\label{sec:3}
Let $\T$ be the unit circle in the complex plane.
We write $t$ for an element of $\T$. We fix the counter-clockwise orientation on the circle $\T$. This oriented circle has a unique spin$^c$ structure up to isomorphism.
Let $V$ be a finite dimensional Hermitian vector space. We denote by $N$ the complex dimension of $V$, and $|| \cdot ||_V$ for its norm.
Let $l^2(\Z; V)$ be the space of sequences $\varphi = \{ \varphi_n \}_{n \in \Z}$ where each $\varphi_n$ is an element of $V$, and satisfies $\sum_{n \in \Z} ||\varphi_n||_V^2 < + \infty$.
Let $S \colon l^2(\Z; V) \rightarrow l^2(\Z; V)$ be the shift operator defined by $(S\varphi)_n = \varphi_{n-1}$.
Let $A_j \colon \T \rightarrow \End_\C(V)$ $(j \in \Z)$ be continuous maps which satisfies $\sum_{j \in \Z}||A_j||_{\infty} < +\infty$, where $||A_j||_{\infty} = \sup_{t \in \T}||A_j(t)||_V$.
We consider a tight-binding Hamiltonian on the lattice $\Z \times \Z$ with the periodic potential.
\begin{definition}[Bulk Hamiltonian]\label{def3.1}
For each $t$ in $\T$, we define a linear map,
\begin{equation*}
	H(t) \colon l^2(\Z; V) \rightarrow l^2(\Z; V),
\end{equation*}
by $(H(t)\varphi)_n = \sum_{j \in \Z}A_j(t) \varphi_{n-j}$. We call $H(t)$ a {\em bulk Hamiltonian}. $H(t)$ is a bounded linear operator and commutes with the shift operator $S$.
\end{definition}

We assume that $H(t)$ is a self-adjoint operator for any $t$ in $\T$. Thus its spectrum $\sigma(H(t))$ is contained in $\R$. The following is our central assumption.

\begin{assumption}[Spectral gap condition]
We assume further that our bulk Hamiltonian has a spectral gap at the Fermi level $\mu$ for any $t$ in $\T$, i.e. $\mu$ is not contained in $\sigma(H(t))$. We call this condition as a {\em spectral gap condition}.
\end{assumption}

We take a positively oriented simple closed smooth loop $\gamma$ through $\mu$ in the complex plane, which satisfies that, for any $t$ in $\T$, the set $\sigma(H(t)) \cap (-\infty, \mu)$ is contained inside the loop $\gamma$, and $\sigma(H(t))\cap (\mu, +\infty)$ is contained outside $\gamma$ (see Figure $1$).
This oriented loop has a unique spin$^c$ structure up to isomorphism.
Let $\SSS^1_\eta$ be the unit circle in the complex plane, and $H(\eta, t) \in \End_\C(V)$ be a continuous family of Hermitian operators on $\SSS^1_\eta \times \T$ given by $H(\eta, t) = \sum_{j \in \Z}A_j(t)\eta^j$.
Since the bulk Hamiltonian commutes with the shift operator $S$, by the Fourier transform $l^2(\Z; V) \cong L^2(\SSS^1_\eta; V)$, $H(t)$ is expressed as a multiplication operator $M_{H(\eta, t)}$ on $L^2(\SSS^1_\eta; V)$ generated by $H(\eta, t)$.
Note that the shift operator $S$ corresponds to $M_\eta$ by the Fourier transform.\footnote{Note that, Graf and Porta used in \cite{GP13}, the operator $\tilde{S}$ defined by $(\tilde{S}\varphi)_n = \varphi_{n+1}$ instead of our shift operator $S$. Consequently, our choice of the orientation of $\SSS^1_\eta$ (in Sect.~\ref{sec:4}) is different from \cite{GP13}. Although our choice of $S$ causes such confusing differences with \cite{GP13}, we decided to use this $S$ for the shift operator (and the orientation on $\SSS^1_\eta$), since this choice is consistent with the one used in the field of Toeplitz operators \cite{BS06}, and we use a family index  of a family of Toeplitz operators in the proof of Lemma \ref{lem5}.}
The following lemma follows from the spectral gap condition.
\begin{lemma}\label{lem1}
For $(\eta, z, t) \in \SSS^1_{\eta} \times \gamma \times \T$, $H(\eta, t) - z$ is invertible.
\end{lemma}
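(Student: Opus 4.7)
The plan is to reduce the invertibility of $H(\eta,t)-z$ to a spectral statement about the bulk Hamiltonian $H(t)$, and then invoke the geometric hypothesis on the loop $\gamma$.

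First I would check that $H(\eta,t)\in\End_\C(V)$ is Hermitian for every $(\eta,t)\in\SSS^1_\eta\times\T$. This follows from the self-adjointness of $H(t)$ which forces $A_j(t)^*=A_{-j}(t)$, together with $\bar\eta=\eta^{-1}$ for $\eta\in\SSS^1_\eta$, so that $H(\eta,t)^*=\sum_j A_j(t)^*\bar\eta^{j}=\sum_j A_{-j}(t)\eta^{-j}=H(\eta,t)$. Consequently $\sigma(H(\eta,t))\subset\R$. Hence it is enough to rule out real points of $\gamma$ as eigenvalues of $H(\eta,t)$: if $z\in\gamma$ is non-real there is nothing to prove, while if $z\in\gamma\cap\R$ we must show that $z\notin\sigma(H(\eta,t))$.

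Second, I would relate the spectrum of the matrix pencil to that of the bulk operator through the Fourier transform introduced in Sect.~\ref{sec:3}. Under the unitary $l^2(\Z;V)\cong L^2(\SSS^1_\eta;V)$, the operator $H(t)$ becomes the multiplication operator $M_{H(\eta,t)}$ by the continuous matrix-valued symbol $\eta\mapsto H(\eta,t)$. For such multiplication operators on a compact base, a scalar $\lambda$ lies in $\sigma(M_{H(\eta,t)})$ if and only if $\lambda$ is an eigenvalue of $H(\eta,t)$ for some $\eta\in\SSS^1_\eta$ (the union being closed by compactness of $\SSS^1_\eta$ and continuity of the eigenvalues). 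In particular $\sigma(H(\eta,t))\subset\sigma(H(t))$ for every fixed $\eta$.

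Third, I would feed in the defining property of $\gamma$. By the spectral gap assumption $\mu\notin\sigma(H(t))$, so $\mu$ itself is not an obstruction. For any other real $z\in\gamma$, either $z<\mu$, in which case the hypothesis places $\sigma(H(t))\cap(-\infty,\mu)$ strictly in the bounded component of $\C\setminus\gamma$, hence $z$ (lying on $\gamma$, not inside it) cannot be in the spectrum; or $z>\mu$, in which case $\sigma(H(t))\cap(\mu,+\infty)$ is strictly outside $\gamma$ and again $z\notin\sigma(H(t))$. Combining with the inclusion from the previous step yields $z\notin\sigma(H(\eta,t))$, and hence $H(\eta,t)-z$ is invertible for all $(\eta,z,t)\in\SSS^1_\eta\times\gamma\times\T$.

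I do not anticipate a serious obstacle: the only slightly delicate point is verifying that $\sigma(M_{H(\eta,t)})=\bigcup_\eta\sigma(H(\eta,t))$ for a continuous matrix-valued symbol on a compact base, which is a standard fact but worth stating carefully since it is the bridge between the resolvent picture used to define the bulk index and the pointwise spectral hypothesis on $\gamma$.
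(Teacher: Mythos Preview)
Your argument is correct and complete; the paper itself gives no proof of this lemma, treating it as immediate from the setup (the Fourier identification $H(t)\cong M_{H(\eta,t)}$ together with the defining property of $\gamma$). What you wrote is precisely the intended reasoning behind that omission.
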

\begin{figure}[htbp]
  \begin{center}
    \includegraphics[width=84mm,clip]{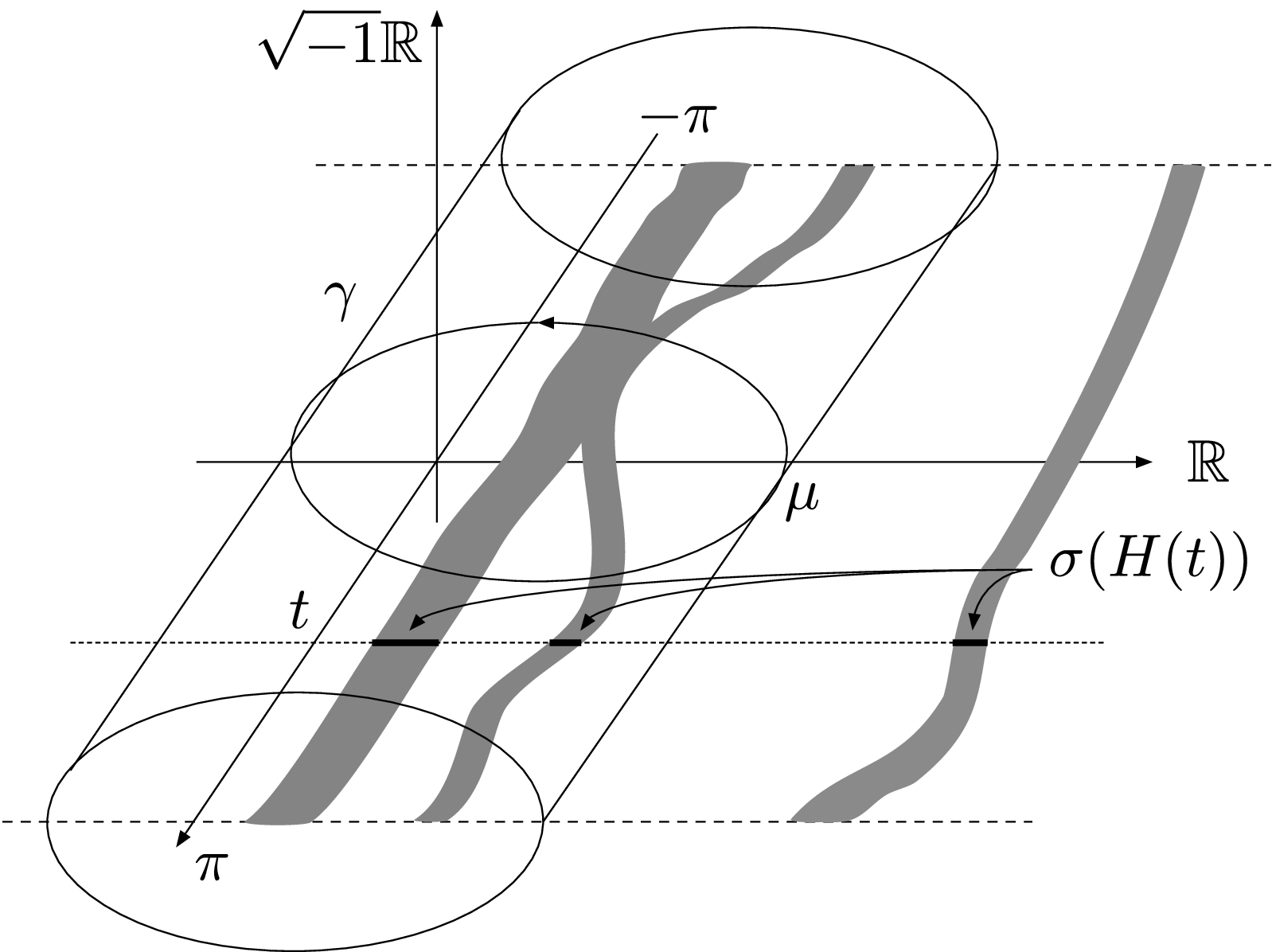}
	\caption{For an element $t$ in $\T$, bold black lines indicate the spectrum of the bulk Hamiltonian $\sigma(H(t))$. The family of spectra of bulk Hamiltonians makes a gray area. In order to make this picture simple, we here consider a family of spectra parametrized by an interval $[-\pi, \pi]$ by considering the pullback by the map $[-\pi, \pi] \rightarrow [-\pi, \pi]/\{ -\pi, \pi \} \cong \T$. Therefore dotted lines at ${\pm \pi}$ should be glued. The loop $\gamma$ (and the torus $\gamma \times \T$) is indicated as a (family of) circle(s)}
  \end{center}
\end{figure}
\begin{remark}\label{rem2}
As we will see later, taking this loop $\gamma$ is substantial for the definition of the bulk index and the edge index. In this sense, we do not need to choose this loop in order to formulate the bulk-edge correspondence for our system.
We need to take this loop in order to define a vector bundle consisting of decaying solutions.
\end{remark}
%%%%%%%%%%%%%%%%%%%%%%%%%%%%%%%%%%%%%%%%%%%%%%%%%%%%

\section{Bulk index, edge index and the bulk-edge correspondence}
\label{sec:4}
We define the bulk index and the edge index, and formulate our main theorem.
\begin{definition}[Bloch bundle]\label{blochbundle}
By the spectral gap condition, Riesz projections give a continuous family of projections on $V$ parametrized by $\SSS^1_\eta \times \T$, given by $\frac{1}{2\pi i} \int_\gamma(\lambda \cdot 1 - H(\eta, t))^{-1}d\lambda$.
The images of this family makes a complex vector subbundle $E_\B$ of the product bundle $\underline{V} = (\SSS^1_\eta \times \T) \times V$. $E_\B$ is called the {\em Bloch bundle}.
\end{definition}
The Bloch bundle defines an element $[E_\B]$ of the $K$-group $K^0(\SSS^1_\eta \times \T)$.
We fix a counter-clockwise orientation on $\SSS^1_\eta$. This oriented circle has a unique spin$^c$ structure up to isomorphism.
We consider the product spin$^c$ structure on $\SSS^1_\eta \times \T$.
By using this spin$^c$ structure, we have the map
$\ind_{\SSS^1_\eta \times \T} \colon K^0(\SSS^1_\eta \times \T) \rightarrow \Z$.
\begin{definition}[Bulk index]\label{bulkindex}
We define the {\em bulk index} of our system by
\begin{equation*}
	\I_{\Bulk} := - \ind_{\SSS^1_\eta \times \T}([E_\B]).
\end{equation*}
\end{definition}
\begin{remark}\label{rem3}
Note that the determinant line bundle associated to this spin$^c$ structure on $\SSS^1_\eta \times \T$ is trivial.
By the Atiyah-Singer index formula \cite{AS3}, our bulk index coincides with the minus of the first Chern number of the Bloch bundle. This minus sign is caused by our choice of orientation of $\SSS^1_\eta$. Note that our bulk index equals to the one considered by Graf--Porta \cite{GP13} and also the TKNN number.
\end{remark}

For each $k \in \Z$, let $\Z_{\geq k} := \{ k, k+1, k+2, \ldots \}$ be the set of all integers greater than or equal to $k$.
$l^2(\Z_{\geq k}; V)$ is a closed subspace of $l^2(\Z; V)$ in a natural way.
For each $k \in \Z$, let $P_{\geq k}$ be the orthogonal projection of $l^2(\Z; V)$ onto $l^2(\Z_{\geq k}; V)$.
\begin{definition}[Edge Hamiltonian]
For each $t$ in $\T$, we consider an operator $H^\#(t)$ given by the compression of  $H(t)$ onto $l^2(\Z_{\geq 0}; V)$, that is,
\begin{equation*}
	H^\#(t) := P_{\geq 0} H(t) P_{\geq 0} \colon l^2(\Z_{\geq 0} ; V) \rightarrow l^2(\Z_{\geq 0} ; V).
\end{equation*}
We call $H^\#(t)$ an {\em edge Hamiltonian}.
\end{definition}

$H^\#(t)$ is the Toeplitz (or the discrete Wiener-Hopf) operator with continuous symbol $H(\eta, t)$.
Since $H(\eta, t)$ is Hermitian, $\{ H^\#(t) - \mu \}_{t \in \T}$ is a norm-continuous family of self-adjoint Fredholm operators on the Hilbert space $l^2(\Z_{\geq 0}; V)$.
\begin{definition}[Edge index]\label{edgeindex}
We define the {\em edge index} of our system as the minus of the spectral flow of the family $\{ H^\#(t) - \mu \}_{t \in \T}$, that is,
\begin{equation*}
	\I_{\Edge} := -\mathrm{sf}(\{ H^\#(t) - \mu \}_{t \in \T}).
\end{equation*}
\end{definition}

We now revisit the definition of the spectral flow (Definition \ref{sf}) and the edge index (see Figure $2$). In order to define $\mathrm{sf}(\{ H^\#(t) - \mu \}_{t \in \T})$, we need to choose $t_i$ and $c_i$.
These data are used to prove our main theorem, and we need to choose such data in the following specific way.
We regard $\{ H^\#(t) - \mu \}_{t \in \T}$ as a family parametrized by $[-\pi, \pi]$, and
consider the following path in $\C \times [-\pi, \pi]$ (or a loop in $\C \times \T$),
\begin{equation*}
	l := \bigcup_{i=0}^{n-1} \mathcal{L}(\tilde{c}_i, \tilde{c}_{i+1}) \times \{ t_i\} \cup \{ \tilde{c}_{i+1} \} \times [t_i, t_{i+1}] \bigcup \mathcal{L}(\tilde{c}_n, \tilde{c}_0) \times \{ t_n \},
\end{equation*}
where $c_0 := 0$, and we set $\tilde{c}_i := c_i +\mu$, for simplicity.
We denote by $\mathcal{L}(a, b)$ the closed interval in $\R$ between $a$ and $b$.
The loop $l$ may have intersections with spectra of edge Hamiltonians.
If we choose $t_i$ and $c_i$ as in Definition \ref{sf}, crossing points appear only at intervals of the form $\mathcal{L}(\tilde{c}_i, \tilde{c}_{i+1}) \times \{ t_i\}$.
We take $t_i$ and $c_i$ so that crossing points appear only at open intervals of the form $\interior{\mathcal{L}}(\tilde{c}_i, \tilde{c}_{i+1}) \times \{ t_i\}$ ($i = 1, \ldots, n-1$), where $\interior{\mathcal{L}}(a, b)$ denotes the open interval in $\R$ between $a$ and $b$,
or at half-open intervals $[\tilde{c}_0, \tilde{c}_1) \times \{ t_0\}$ and $[\tilde{c}_0, \tilde{c}_n ) \times \{ t_n \}$.\footnote{The existence of $t_i$ and $c_i$ which satisfies these conditions follows by the argument in \cite{Ph96}.}
We see that, when we consider $\mathrm{sf}(\{ H^\#(t) - \mu \}_{t \in \T})$, we do not need to consider about crossing points at $\{ \tilde{c}_0 \} \times \{ t_0 \}$ and $\{ \tilde{c}_0 \} \times \{ t_n \}$.
This is because crossing points at these points, if exist, do not contribute to the spectral flow (actually, they cancel out, in our case).
We assume that there are no crossing points at these points, for simplicity.
Then, $\mathrm{sf}(\{ H^\#(t) - \mu \}_{t \in \T})$ is the net number of crossing points $\{ u_1, \ldots, u_m \}$ counted with multiplicity, where crossing points in $\mathcal{L}(\tilde{c}_i, \tilde{c}_{i+1}) \times \{ t_i\}$  with $\tilde{c}_i > \tilde{c}_{i+1}$ are counted with positive sign, and that of $\tilde{c}_i < \tilde{c}_{i+1}$ are counted with negative sign.
We fix an orientation on $l$ which is compatible with the natural one on the interval $\{ \tilde{c}_{i+1} \} \times [t_i, t_{i+1}]$ (see Figure $2$).
\begin{remark}\label{rem4}
At each crossing point, there exist {\em edge states} which mean eigenvectors of our edge Hamiltonian of eigenvalue $\mu$. In this sense, the edge index is defined by counting edge states.
\end{remark}

\begin{figure}[htbp]
  \begin{center}
    \includegraphics[width=84mm,clip]{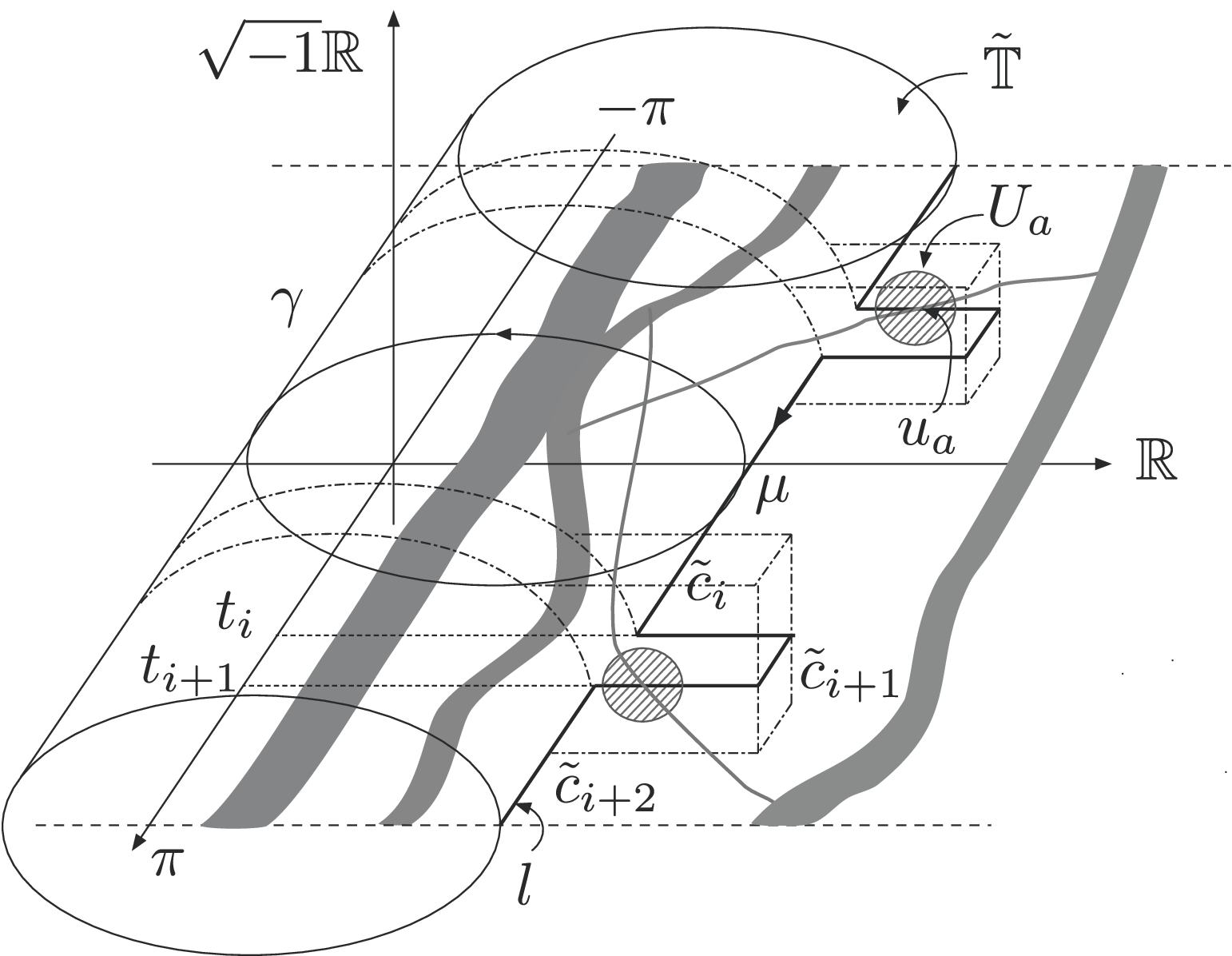}
	\caption{The spectrum of edge Hamiltonians (gray area and gray line), the loop $l$ and the deformed torus $\tilde{\T}$.
The edge index is calculated by counting crossing points $\{ u_1, \ldots, u_m \}$  of the gray line and the loop $l$ taking multiplicity and sign into account.
The deformed torus intersects with the component of $\R \times \T$ in the gap across the loop $l$
}
  \end{center}
\end{figure}

Our main theorem, the bulk-edge correspondence for two-dimensional type A topological insulators is the following:
\begin{theorem}[Bulk-edge correspondence for quantum Hall systems]\label{mainthm}
The bulk index coincides with the edge index. That is,
$\I_{\Bulk} = \I_{\Edge}$.
\end{theorem}
To give a proof of this theorem is the purpose of the rest of this paper.
%%%%%%%%%%%%%%%%%%%%%%%%%%%%%%section%%%%%%%%%%%%%%%%%%%%%
\section{A vector bundle consisting of decaying solutions}
\label{sec:5}
In this section, we define a vector bundle over $\gamma \times \T$, which is a generalization of the one considered by Graf and Porta in \cite{GP13}.
\begin{lemma}\label{lem2}
	There exists a positive integer $K$ such that for any integer $k \geq K$ and $(z, t)$ in $\gamma \times \T$, the following map is surjective.
\begin{equation*}
	P_{\geq k}(H(t) - z)P_{\geq 0} \colon l^2(\Z_{\geq 0} ; V) \rightarrow l^2(\Z_{\geq k} ; V).
\end{equation*}
\end{lemma}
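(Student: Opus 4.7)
The plan is to construct a right inverse for $P_{\geq k}(H(t)-z)P_{\geq 0}$ by a Neumann-series perturbation of the full-line resolvent, which is available by Lemma~\ref{lem1}. Since $\gamma \times \T$ is compact and the map $(z,t) \mapsto (H(t)-z)^{-1}$ is continuous into the bounded operators on $l^2(\Z;V)$, I would first fix a uniform bound
\begin{equation*}
M \; := \; \sup_{(z,t)\in\gamma\times\T}\bigl\|(H(t)-z)^{-1}\bigr\| \; < \; \infty,
\end{equation*}
and write $G(z,t) := (H(t)-z)^{-1}$.

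Given $\psi \in l^2(\Z_{\geq k};V)$, extend it by zero to $l^2(\Z;V)$ and consider the candidate preimage $\varphi := P_{\geq 0}G(z,t)\psi \in l^2(\Z_{\geq 0};V)$. Decomposing $I = P_{\geq 0}+P_{<0}$ and using $(H(t)-z)G(z,t)\psi = \psi$ gives
\begin{equation*}
P_{\geq k}(H(t)-z)\varphi \; = \; \psi \; - \; T_k(z,t)\psi, \qquad T_k(z,t) \; := \; P_{\geq k}(H(t)-z)P_{<0}G(z,t),
\end{equation*}
viewed as an operator on $l^2(\Z_{\geq k};V)$. Thus surjectivity of $P_{\geq k}(H(t)-z)P_{\geq 0}$ will follow once $\|T_k(z,t)\| < 1$ uniformly on $\gamma \times \T$: then $I - T_k(z,t)$ is invertible on $l^2(\Z_{\geq k};V)$, and for any target $\psi' \in l^2(\Z_{\geq k};V)$ the element $\varphi := P_{\geq 0}G(z,t)(I - T_k(z,t))^{-1}\psi'$ satisfies $P_{\geq k}(H(t)-z)\varphi = \psi'$.

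The core estimate is on the ``gap'' operator $P_{\geq k}(H(t)-z)P_{<0}$. Since $k \geq 0$, the scalar part $-z P_{<0}$ is annihilated by $P_{\geq k}$, so only the hopping part contributes. For $\varphi \in l^2(\Z_{<0};V)$ and $n \geq k$ one has $(H(t)\varphi)_n = \sum_{m<0} A_{n-m}(t)\varphi_m$, and the index $j = n-m$ ranges only over $j \geq k+1$; Young's inequality for discrete convolution therefore yields
\begin{equation*}
\bigl\| P_{\geq k}(H(t)-z)P_{<0} \bigr\| \; \leq \; \sum_{j \geq k+1} \|A_j\|_{\infty},
\end{equation*}
uniformly in $(z,t)$. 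Combining with the bound $M$ on $G$, one obtains $\|T_k(z,t)\| \leq M \sum_{j > k}\|A_j\|_{\infty}$, which tends to zero as $k \to \infty$ by the summability assumption $\sum_j \|A_j\|_{\infty} < \infty$. Choosing $K$ so that $\sum_{j > K}\|A_j\|_{\infty} < 1/M$ then concludes the proof. The only nontrivial technical step is the uniform convolution estimate above; the rest is formal, and the spectral gap enters exclusively through Lemma~\ref{lem1} and the compactness of $\gamma \times \T$ used to produce $M$.
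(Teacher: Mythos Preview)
Your argument is correct and is a genuinely different---and cleaner---route than the paper's. The paper first truncates $H(t)$ to a finite-range operator $H_k(t)$ (a Laurent polynomial symbol $s_k$), then uses the Atiyah--Bott Toeplitz factorisation trick: after multiplying by $\eta^{-k}$ the symbol $s_k\eta^{-k}$ has no positive powers of $\eta$, so $(s_k\eta^{-k})^{-1}$ has no negative ones, and this yields an explicit right inverse $P_{\geq 0}M_{s_k^{-1}}P_{\geq k}$ for the truncated problem. A second perturbation argument then transfers surjectivity from $H_k$ back to $H$. Your approach skips the truncation entirely: you take the full-line resolvent $G(z,t)$ (available by Lemma~\ref{lem1}), use $P_{\geq 0}G(z,t)$ as an approximate right inverse, and show the defect $T_k = P_{\geq k}(H(t)-z)P_{<0}G(z,t)$ is small because the ``gap'' operator $P_{\geq k}H(t)P_{<0}$ only sees hopping terms $A_j$ with $j>k$. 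What the paper's proof buys is thematic consistency: the Atiyah--Bott trick and the explicit Toeplitz right inverse feed directly into the family-index viewpoint developed later (Lemma~\ref{lem5}). What your proof buys is brevity and transparency: it isolates the single analytic mechanism (decay of the tail $\sum_{j>k}\|A_j\|_\infty$) that forces surjectivity, and needs no algebraic factorisation.
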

\begin{proof}
For a non-negative integer $k$, we define a linear map
$H_k(t) \colon l^2(\Z; V) \rightarrow l^2(\Z; V)$
by
$(H_k(t)\varphi)_n = \sum_{j=-k}^{j=k}A_j(t) \varphi_{n-j}$.
We first show that there exists some integer $K'$ such that for any $k \geq K'$ and for any $(z, t) \in \gamma \times \T$, the operator $P_{\geq k}(H_k(t) - z)P_{\geq 0} \colon l^2(\Z_{\geq 0} ; V) \rightarrow l^2(\Z_{\geq k} ; V)$ is surjective.
Note that $|| (H(t)-z) - (H_k(t)-z)|| \leq \sum_{|j| > k}||A_j||_{\infty}  \rightarrow 0$ as $k \rightarrow \infty$.
Since $H(t)-z$ is invertible by Lemma~\ref{lem1}, there exists a positive integer $K'$ such that for any $k \geq K'$ and $(z, t)$ in $\gamma \times \T$,  $H_k(t) - z$ is invertible.
Note that $S P_{\geq k} = P_{\geq k+1} S$, and we have $P_{\geq 0} = (S^*)^k P_{\geq k} S^k$.
Since $(H_k(t) - z) \eta^{-k}$ does not contain the term of $\eta^m$ $(m > 0)$, $(H_k(t) - z)^{-1}\eta^k$ does not contain the term of $\eta^m$ $(m < 0)$, and so $(H_k(t) - z)^{-1}M_{\eta^{k}}$ maps $l^2(\Z_{\geq k}; V)$ to $l^2(\Z_{\geq k}; V)$.
Thus for $\varphi \in l^2(\Z_{\geq k}; V)$,
\begin{gather*}
	(P_{\geq k} (H_k(t) - z) P_{\geq 0})(P_{\geq 0 }(H_k(t) - z)^{-1}P_{\geq k})(\varphi) \hspace{20mm} \\
		\hspace{10mm} = P_{\geq k} (H_k(t) - z) (S^*)^k P_{\geq k} S^k (H_k(t) - z)^{-1}(\varphi)\\
		\hspace{12.5mm} = P_{\geq k} (H_k(t) - z) M_{\eta^{-k}} P_{\geq k} M_{\eta^{k}} (H_k(t) - z)^{-1}(\varphi)\\
		\hspace{12.5mm} = P_{\geq k} (H_k(t) - z) M_{\eta^{-k}} P_{\geq k} (H_k(t) - z)^{-1}M_{\eta^{k}}(\varphi)\\
		\hspace{14.5mm} = P_{\geq k} (H_k(t) - z) M_{\eta^{-k}} (H_k(t) - z)^{-1}M_{\eta^{k}}(\varphi) = \varphi.
\end{gather*}
Thus $P_{\geq k} (H_k(t) - z) P_{\geq 0}$ has a right inverse, and it is surjective.\footnote{Such argument was used by Atiyah and R. Bott to compute the family index of a family of Toeplitz operators \cite{At68,AB64}.}

As is shown above, the composite of $P_{\geq 0 }(H_k(t) - z)^{-1}P_{\geq k}$ and $P_{\geq k} (H_k(t) - z) P_{\geq 0}$ is the identity map. Thus $P_{\geq 0 }(H_k(t) - z)^{-1}P_{\geq k}$ is bounded below, and so closed range.
We also have
\begin{equation*}
	\mathrm{Im}(P_{\geq 0 }(H_k(t) - z)^{-1}P_{\geq k}) \cap \Ker(P_{\geq k} (H_k(t) - z) P_{\geq 0}) = \{ 0 \}, 
\end{equation*}
\begin{equation*}
	\mathrm{Im}(P_{\geq 0 }(H_k(t) - z)^{-1}P_{\geq k}) \oplus \Ker(P_{\geq k} (H_k(t) - z) P_{\geq 0}) = l^2(\Z_{\geq 0}; V).
\end{equation*}
Let $X_k := \mathrm{Im}(P_{\geq 0 }(H_k(t) - z)^{-1}P_{\geq k})$. We denote the orthogonal projection onto $X_k$ by $P_{X_k}$.
Then by the open mapping theorem, the map $P_{\geq k} (H_k(t) - z) P_{X_k} \colon X_k \rightarrow l^2(\Z_{\geq k}; V)$ is invertible.
On the other hand, we have $P_{\geq 0 }(H_k(t) - z)^{-1}P_{\geq k} = P_{X_k}(H_k(t) - z)^{-1}P_{\geq k}$ and
\begin{eqnarray*}
	\bm{1} &=& (P_{\geq k} (H_k(t) - z) P_{\geq 0})(P_{\geq 0 }(H_k(t) - z)^{-1}P_{\geq k})\\
		&=& (P_{\geq k} (H_k(t) - z) P_{X_k})(P_{X_k}(H_k(t) - z)^{-1}P_{\geq k}).
\end{eqnarray*}
Therefore, the inverse of $P_{\geq k} (H_k(t) - z) P_{X_k}$ is $P_{X_k}(H_k(t) - z)^{-1}P_{\geq k}$.

Since $H_k(t) - z$ converges to $H(t) - z$ as $k \rightarrow \infty$ uniformly with respect to $(z, t) \in \gamma \times \T$,
the operator $(H_k(t) - z)$ converges to $H(t) - z$ uniformly, and so
$(H_k(t) - z)^{-1}$ converges to $(H(t) - z)^{-1}$ uniformly with respect to $(z, t)$.
Thus there exist some $L>0$ and an integer $K''$ such that for any $k \geq K''$ and $(z, t) \in \gamma \times \T$, we have $||(H_k(t) - z)^{-1}|| < L$.
For $k \geq K''$, we have
$||(P_{\geq k} (H_k(t) - z) P_{X_k})^{-1}||^{-1} =  ||P_{\geq k} (H_k(t) - z)^{-1} P_{X_k}||^{-1} > ||(H_k(t) - z)^{-1}||^{-1} > L^{-1} > 0$.
On the other hand, we have,
\begin{equation*}
||P_{\geq k}(H(t) - z)P_{X_k} - P_{\geq k}(H_k(t) - z)P_{X_k}|| \leq || H(t) - H_k(t)|| \leq \sum_{|j| > k}||A_j||_{\infty} \rightarrow 0.
\end{equation*}
Thus there is an positive integer $K$ such that for any $k \geq K$ and $(z, t) \in \gamma \times \T$,
\begin{equation*}
||P_{\geq k}(H(t) - z)P_{X_k} - P_{\geq k}(H_k(t) - z)P_{X_k}|| \leq || (P_{\geq k}(H_k(t) - z)P_{X_k})^{-1} ||^{-1},
\end{equation*}
which means that $P_{\geq k}(H(t) - z)P_{X_k}$ is invertible. Thus $P_{\geq k}(H(t) - z)P_{\geq 0}$ is surjective for such $k \geq K$ and $(z, t) \in \gamma \times \T$.
\end{proof}
We choose such an integer $k \geq K$, and set
$(E_{\GP})_{z, t} := \Ker (P_{\geq k}(H(t) - z)P_{\geq 0})$.
\begin{remark}\label{rem6}
For $n \geq k$, we have $((P_{\geq k}(H(t) - z)P_{\geq 0})(\varphi))_n = ((H(t) - z)(\varphi))_n$.
Thus $(E_{\GP})_{z, t}$ consists of sequences $\varphi = \{ \varphi_n \}_{n \geq 0}$ in $l^2(\Z_{\geq 0} ; V)$ which satisfies $(H(t)(\varphi))_n = z \varphi_n$ for $n \geq k$.
\end{remark}
We define an operator $H^{\flat}(z,t) \colon l^2(\Z_{\geq 0} ; V) \rightarrow l^2(\Z_{\geq 0} ; V)$ to be the composite of $P_{\geq k}(H(t) - z)P_{\geq 0}$ and the inclusion $l^2(\Z_{\geq k}; V) \hookrightarrow l^2(\Z_{\geq 0}; V)$.
\begin{lemma}\label{lem3}
For any $(z, t)$ in $\gamma \times \T$, $H^{\flat}(z, t)$ is a Fredholm operator whose Fredholm index is zero. Moreover, $\Ker H^{\flat}(z,t)$ and $(E_{\GP})_{z,t}$ are the same, and $\Coker H^{\flat}(z,t)$ is naturally isomorphic to $V^{\oplus k}$.
Thus the dimension of $(E_{\GP})_{z,t}$ is $kN$ which is constant with respect to the parameter.
\end{lemma}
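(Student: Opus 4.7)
The plan is to read off the kernel and cokernel of $H^{\flat}(z,t)$ directly from Lemma~\ref{lem2}, and then to establish Fredholmness with vanishing index by comparison with the matrix Toeplitz operator $T(z,t) := P_{\geq 0}(H(t) - z)P_{\geq 0}$ on $l^2(\Z_{\geq 0}; V)$.

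First, since the inclusion $l^2(\Z_{\geq k}; V) \hookrightarrow l^2(\Z_{\geq 0}; V)$ is injective, $\Ker H^{\flat}(z,t)$ coincides with $\Ker(P_{\geq k}(H(t)-z)P_{\geq 0}) = (E_{\GP})_{z,t}$. By Lemma~\ref{lem2}, the map $P_{\geq k}(H(t)-z)P_{\geq 0}$ surjects onto $l^2(\Z_{\geq k}; V)$, so the image of $H^{\flat}(z,t)$ inside $l^2(\Z_{\geq 0}; V)$ is exactly the closed subspace $l^2(\Z_{\geq k}; V)$. Consequently $\Coker H^{\flat}(z,t)$ is naturally identified with the orthogonal complement $P_{[0,k)} l^2(\Z_{\geq 0}; V) \cong V^{\oplus k}$, which has dimension $kN$.

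Second, using the decomposition $P_{\geq k} = P_{\geq 0} - P_{[0,k)}$, one writes
\begin{equation*}
H^{\flat}(z,t) = T(z,t) - P_{[0,k)}(H(t)-z)P_{\geq 0},
\end{equation*}
where the second term is of finite rank since its image lies in the $kN$-dimensional subspace $P_{[0,k)} l^2(\Z_{\geq 0}; V)$. Lemma~\ref{lem1} ensures that the symbol $H(\eta,t) - z$ of the Toeplitz operator $T(z,t)$ is invertible for every $\eta \in \SSS^1_\eta$, so $T(z,t)$ is Fredholm. Therefore $H^{\flat}(z,t)$ is Fredholm with $\ind(H^{\flat}(z,t)) = \ind(T(z,t))$.

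Third, the main step is to show $\ind(T(z,t)) = 0$. Since each $H(\eta,t)$ is Hermitian, $\sigma(H(t)) \subset \R$ (using the identification $\sigma(H(t)) = \bigcup_{\eta} \sigma(H(\eta,t))$ via Fourier transform), so the open set $\C \setminus \sigma(H(t))$ is connected. The same reasoning as in Lemma~\ref{lem1} shows that for every $z' \in \C \setminus \sigma(H(t))$, $H(\eta,t) - z'$ is invertible for all $\eta$, hence $T(z',t)$ is Fredholm. By the homotopy invariance of the Fredholm index, $\ind T(z',t)$ is constant in $z'$ on this connected set. Taking $z_0$ with $|z_0| > \|H(t)\|$, we have $T(z_0, t) = H^{\#}(t) - z_0 I$ on $l^2(\Z_{\geq 0}; V)$, which is invertible by Neumann series (since $\|H^{\#}(t)\| \leq \|H(t)\| < |z_0|$) and hence has index zero. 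Hence $\ind(T(z,t)) = 0$ for every $z \in \gamma$, so $\dim(E_{\GP})_{z,t} = \dim \Coker H^{\flat}(z,t) = kN$, independent of $(z,t)$. I expect the main technical hurdle to be the last homotopy argument; its validity rests on the connectedness of $\C \setminus \sigma(H(t))$, which is ultimately a consequence of the self-adjointness of the bulk Hamiltonian.
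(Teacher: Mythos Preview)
Your proof is correct and follows essentially the same route as the paper: identify kernel and cokernel directly from the definition and Lemma~\ref{lem2}, observe that $H^{\flat}(z,t)$ differs from the Toeplitz operator $H^\#(t)-z$ by a finite-rank term, and deduce index zero by a homotopy argument. The only difference is the endpoint of the homotopy: the paper connects $z$ to the Fermi level $\mu$ (where $H^\#(t)-\mu$ is self-adjoint Fredholm, hence of index zero), whereas you connect $z$ through the resolvent set to a large $z_0$ where $H^\#(t)-z_0$ is invertible; both are valid and rest on the same self-adjointness of $H(t)$.
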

\begin{proof}
For $(z, t) \in \gamma \times \T$, $(H^\#(t) - z) - H^{\flat}(z,t)$ is a finite rank operator, and $H^\#(t) - z$ is a Fredholm operator.
Thus, $H^{\flat}(z, t)$ also is a Fredholm operator, and their Fredholm indices are the same.
Since $H^\#(t) - z$ is connected to a self-adjoint Fredholm operator $H^\#(t) - \mu$ by a continuous path of Fredholm operators, the homotopy invariance of the Fredholm index says that the Fredholm index of  $H^\#(t) - z$ is zero. Thus the Fredholm index of $H^{\flat}(z, t)$ is zero, and so the dimensions of its kernel and cokernel are the same.
On the other hand, $\Ker H^{\flat}(z, t)$ coincides with $(E_{\GP})_{z,t}$ by definition.
By Lemma~\ref{lem2}, $\Coker H^{\flat}(z,t)$ is naturally isomorphic to the closed subspace of $l^2(\Z_{\geq 0}; V)$ which consists of the sequences with $\varphi_n = 0$ for $n \geq k$, which is naturally isomorphic to $V^{\oplus k}$.
\end{proof}
\begin{definition}\label{GPbundle}
By Lemma~\ref{lem3}, we have a vector bundle over $\gamma \times \T$ whose fiber at a point $(z, t)$ is given by $(E_{\GP})_{z, t}$. We denote this vector bundle by $E_{\GP}$. The rank of this vector bundle is $kN$. 
\end{definition}
Note that the family index of a family of Fredholm operators $\{ H^\flat (z,t) \}_{\gamma \times \T}$ is an element of the $K$-group $K^0(\gamma \times \T)$, and is given by $[E_{\GP}] - [\underline{V}^{\oplus k}]$.
Thus the bundle $E_{\GP}$ appears as a family index.
As in the definition of the bulk index, we consider the product spin$^c$ structure on $\gamma \times \T$.
By using this spin$^c$ structure, we have the map
$\ind_{\gamma \times \T} \colon K^0(\gamma \times \T) \rightarrow \Z$.
\begin{definition}[Graf--Porta \cite{GP13}]\label{GPindex}
Following Graf--Porta's idea, we here introduce another invariant $\I_{\GP}$ defined by,
\begin{equation*}
	\I_{\GP} := \ind_{\gamma \times \T}([E_{\GP}] - [\underline{V}^{\oplus k}]).
\end{equation*}
\end{definition}

\begin{remark}\label{rem7}
The conclusion of Lemma~\ref{lem3} also holds for $(z, t) \in \C \times \T$ which satisfies $z \notin \sigma(H(t))$.
Thus we can define a vector bundle in the same way on the space $\{ (z, t) \in \C \times \T \ | \ z \notin \sigma(H(t)) \}$. We denote this vector bundle by $\tilde{E}_{\GP}$. Then we have $E_{\GP} = \tilde{E}_{\GP}|_{\gamma \times \T}$.
Such extension is used in Sect. \ref{subsec:6.2}.
\end{remark}
\begin{remark}\label{rem8}
The determinant line bundle associated to the spin$^c$ structure on $\gamma \times \T$ is trivial.
Thus, as in Remark~\ref{rem3}, we can show that $\ind_{\gamma \times \T}([\underline{V}^{\oplus k}]) = 0$, and
$\I_\GP$ coincides with the first Chern number of the bundle $E_{\GP}$.
\end{remark}
\begin{remark}\label{rem9}
If, for some $k$, we have $A_j = 0$ for $j < -k$ and $k < j$, and if $A_{-k}$ and $A_{k}$ are invertible, then the difference equation $H(t)\varphi = z \varphi$ is solved by an initial condition. In this case, each fiber $(E_{\GP})_{z, t}$ is identified with the space of formal solutions of the equation $H(t)\psi = z \psi$ which decays as $n \to +\infty$. 
This is Graf--Porta's original idea to define such a vector bundle \cite{GP13}.
\end{remark}

%%%%%%%%%%%%%%%%%%%%%%%%%%%%%%%%%%%%%%%%%%%%%%%%%%%
\section{Proof of the bulk-edge correspondence}
\label{sec:6}
In this section, we prove Theorem~\ref{mainthm} by first showing that the bulk index coincides with the minus of $\I_{\GP}$ (Sect.~\ref{subsec:6.1}) and next showing that the minus of $\I_{\GP}$ coincides with the edge index (Sect.~\ref{subsec:6.2}).
%%%%%%%%%%%%%%%%%%%%%%%%%%%%%%%%%%%%%%%%%%%%%%%%%%%
\subsection{Bulk index and $\I_{\GP}$}
\label{subsec:6.1}
We show the bulk index coincides with the minus of $\I_{\GP}$.
The key ingredient is the cobordism invariance of the index.
\begin{proposition}\label{prop1}
	$\I_{\Bulk} = -\I_{\GP}$.
\end{proposition}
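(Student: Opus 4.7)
The plan is to follow the outline announced in the introduction: realize both $\I_\Bulk$ and $\I_\GP$ as indices of suitable compactly supported K-classes on higher-dimensional ambient spaces, relate each of them via inverse Thom isomorphisms to $[E_\B]$ on $\SSS^1_\eta \times \T$ and to $[E_\GP] - [\underline{V}^{\oplus k}]$ on $\gamma \times \T$ respectively, and then use a cobordism between the two ambient K-classes so that the cobordism invariance of the topological index forces the two integer indices to coincide.

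First, for the bulk side, I would consider the ambient space $\SSS^1_\eta \times \C_z \times \T$ and the bundle homomorphism $H(\eta,t) - z \colon \underline{V} \to \underline{V}$. Because the spectrum $\sigma(H(\eta,t))$ is uniformly bounded, this endomorphism is invertible whenever $|z|$ is large, so the triple defines a compactly supported class $\alpha_\Bulk \in K^0_\cpt(\SSS^1_\eta \times \C_z \times \T)$. The inverse of the Thom isomorphism along the $\C_z$ factor should be computed by the standard Atiyah--Bott style argument: splitting $\underline{V} = E_\B \oplus E_\B^{\perp}$ according to whether eigenvalues of $H(\eta,t)$ lie inside or outside $\gamma$, each summand gives a Bott class tensored with the corresponding eigenbundle, so that the inverse Thom image of $\alpha_\Bulk$ equals $\pm[E_\B]$ up to the sign convention fixed by the orientations.

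For the Graf--Porta side, I would use the extension $\tilde{E}_\GP$ from Remark~\ref{rem7} together with the Toeplitz family $\{H^\flat(z,t)\}$ from Lemma~\ref{lem3}. This family is a Fredholm family on the open subset $\Omega = \{(z,t) \in \C_z \times \T : z \notin \sigma(H(t))\}$ whose family index on $\gamma \times \T \subset \Omega$ equals $[E_\GP] - [\underline{V}^{\oplus k}]$. Combining the Fredholm symbol of $H^\flat(z,t)$ with the resolvent data for $|z|$ large produces a compactly supported K-class $\alpha_\GP$ on $\C_z \times \T$ (or on a suitable related ambient manifold) whose inverse Thom reduction along a neighbourhood of $\gamma$ recovers exactly $[E_\GP] - [\underline{V}^{\oplus k}]$.

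The core step is building the cobordism between $\alpha_\Bulk$ and $\alpha_\GP$. Both classes are manufactured from the single operator-valued symbol $H(t) - z$, with the only difference being whether it acts on $l^2(\Z;V)$ or on its compression to $l^2(\Z_{\geq 0};V)$; this suggests a natural interpolation through compressions $P_{\geq N} H(t) P_{\geq N}$ as $N$ runs through a parameter interval, or equivalently through pushing the edge off to $-\infty$. Combined with the identification of $\SSS^1_\eta \times \T$ and $\gamma \times \T$ as two sections of a common 3-dimensional ambient cobordism in $\C_z \times \SSS^1_\eta \times \T$ (or an analogous Euclidean embedding), this yields a K-class on a manifold with boundary whose two boundary components carry $\alpha_\Bulk$ and $\alpha_\GP$; the cobordism invariance of the topological index then gives $\I_\Bulk = \I_\GP$. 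The main obstacle I anticipate is purely bookkeeping: pinning down the right ambient (possibly non-compact) manifold, checking that the extension of the K-classes across the cobordism really is compactly supported, tracking the trivial summand $\underline{V}^{\oplus k}$, and making sure that the product spin$^c$ structures on $\SSS^1_\eta \times \T$ and on $\gamma \times \T$ are coherently carried through the cobordism so that the signs in the two Thom isomorphisms match.
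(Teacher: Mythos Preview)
Your high-level outline matches the paper's (inverse Thom isomorphisms plus cobordism invariance), but the execution you sketch diverges from the paper at two essential points and contains a concrete error.

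On the bulk side, if you work on all of $\SSS^1_\eta \times \C_z \times \T$, the inverse Thom image of $[\underline{V},\underline{V};H(\eta,t)-z]$ is $[\underline{V}]$, not $[E_\B]$: each eigenvalue branch contributes one shifted Bott class, and all of them together give the full trivial bundle. To isolate $[E_\B]$ you must restrict to the open disk bounded by $\gamma$, so that only the eigenvalues inside $\gamma$ lie in the support. This is exactly why the paper works on $\SSS^1_\eta \times (\D^2_z\setminus\gamma)\times\T$ rather than on $\SSS^1_\eta \times \C_z \times \T$.

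On the Graf--Porta side and the cobordism, you are missing the paper's main idea. The paper does \emph{not} interpolate between the bulk operator on $l^2(\Z;V)$ and its compression; no infinite-dimensional family enters the cobordism at all. Instead, the single \emph{finite-dimensional} symbol $H(\eta,t)-z\in\End_\C(V)$, defined on $\SSS^1_\eta\times\gamma\times\T$, is extended arbitrarily to the $5$-manifold $X=\bigl((\D^2_\eta\times\D^2_z)\setminus(\SSS^1_\eta\times\gamma)\bigr)\times\T$, whose boundary has two components: $\SSS^1_\eta\times(\D^2_z\setminus\gamma)\times\T$ and $(\D^2_\eta\setminus\SSS^1_\eta)\times\gamma\times\T$. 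The restriction to the first component gives $\alpha_z$ with $\beta_z^{-1}(\alpha_z)=[E_\B]$; the restriction to the second gives $\alpha_\eta$, and the key Lemma~\ref{lem5} identifies $\beta_\eta^{-1}(\alpha_\eta)$ with $-[E_\GP]+[\underline{V}^{\oplus k}]$ via the Atiyah commutative square linking $\partial$, the Toeplitz map $\mathcal{T}$, and the family index. Thus the Thom isomorphism producing $\I_\GP$ is in the $\eta$-direction, not the $z$-direction as your sketch suggests, and the cobordism is the purely geometric one given by $X$ together with an arbitrary extension of the finite-rank symbol. Your proposed interpolation through compressions $P_{\geq N}H(t)P_{\geq N}$ is neither needed nor easy to make rigorous (the Hilbert spaces change with $N$), and the ``$3$-dimensional ambient cobordism in $\C_z\times\SSS^1_\eta\times\T$'' you mention has the wrong dimension to bound the two $4$-manifolds in question.
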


Let $\D^2_\eta$ be the closed unit disk in the complex plane whose boundary is $\SSS^1_\eta$, and let $\D^2_z$ be the closed domain of the complex plane surrounded by $\gamma$. We fix, on $\D^2_\eta$ and $\D^2_z$, spin$^c$ structures naturally induced by the spin$^c$ structure of the complex structure of the complex plane.
Let $X := (\D^2_\eta \times \D^2_z) \setminus (\SSS^1_\eta \times \gamma) \times \T$ and let $Y := \partial X = \SSS^1_\eta \times (\D^2_z \setminus \gamma) \times \T \sqcup (\D^2_\eta \setminus \SSS^1_\eta) \times \gamma \times \T$ be the boundary of the manifold with boundary $X$.
We consider the product spin$^c$ structure on $X$ and the boundary spin$^c$ structure on $Y$.
We take an extension\footnote{Take an extension of $H(\eta, t) - z$ onto $\D^2_\eta \times \D^2_z \times \T$ first, and then restrict it onto $X$.}
 $f=f(\eta, z, t) \colon X \rightarrow \End_\C(V)$ of $H(\eta, t) - z$, which is defined on $\SSS^1_{\eta} \times \gamma \times \T$.
The map $f$ defines an endomorphism of $\underline{V} = X \times V$ which we also write $f$.
We now consider restrictions of the endomorphism $f$ of $\underline{V}$ onto $\SSS^1_\eta \times (\D^2_z \setminus \gamma) \times \T$ and $(\D^2_\eta \setminus \SSS^1_\eta) \times \gamma \times \T$.
By Lemma~\ref{lem1}, each restriction defines an element $\alpha_z \in K^0_\cpt(\SSS^1_\eta \times (\D^2_z \setminus \gamma) \times \T)$ and $\alpha_\eta \in K^0_\cpt((\D^2_\eta \setminus \SSS^1_\eta) \times \gamma \times \T)$, respectively.
We first see that each element maps, through Thom isomorphisms and push-forward maps, to the bulk index (Lemma~\ref{lem4}) and $\I_{\GP}$ (Lemma~\ref{lem5}).
We then prove Proposition~\ref{prop1} by using the cobordism invariance of the index (Lemma~\ref{lem6}).
We denote by $F$ the composite of the inverse of the Thom isomorphism\footnote{Since $\SSS^1_\eta \times (\D^2_z \setminus \gamma) \times \T \subset \SSS^1_\eta \times \C \times \T$, we have a map $(\SSS^1_\eta \times \C \times \T)^+ \rightarrow (\SSS^1_\eta \times (\D^2_z \setminus \gamma) \times \T)^+$ (see also Sect. \ref{sec:2.1.1}) which induces an isomorphism between $K^0_{\cpt}(\SSS^1_\eta \times (\D^2_z \setminus \gamma) \times \T)$ and $K^0_{\cpt}(\SSS^1_\eta \times \C \times \T)$. We here identify these two by this isomorphism.}
 $\beta_z^{-1} \colon K^0_\cpt(\SSS^1_\eta \times (\D^2_z \setminus \gamma) \times \T) \rightarrow K^0(\SSS^1_\eta \times \T)$ and $\ind_{\SSS^1_\eta \times \T} \colon K^0(\SSS^1_\eta \times \T) \rightarrow \Z$.
The relation between $\alpha_z$ and $\I_\Bulk$ is stated as follows, which follows easily from Atiyah--Bott's elementary proof of the Bott periodicity theorem \cite{AB64}.
\begin{lemma}\label{lem4}
	$\beta_z^{-1}(\alpha_z) = [E_\B]$.
Thus we have $F(\alpha_z) = -\I_\Bulk$.
\end{lemma}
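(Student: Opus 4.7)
The plan is to compute $\alpha_z$ directly, mimicking Atiyah--Bott's proof of the Bott periodicity theorem.  First I would observe that $\alpha_z$ depends only on $f$ up to homotopy of extensions keeping the boundary values on $\SSS^1_\eta \times \gamma \times \T$ fixed, so I may replace the given extension by the tautological one, $f(\eta, z, t) := H(\eta, t) - z$, defined on all of $\SSS^1_\eta \times \D^2_z \times \T$; this is invertible near the boundary by Lemma~\ref{lem1}.  The key tool is then the Riesz spectral projection $P_\B(\eta, t) = \frac{1}{2\pi i}\oint_\gamma (\lambda - H(\eta,t))^{-1}\,d\lambda$ of Definition~\ref{blochbundle}, which cuts the ambient trivial bundle as $\underline{V} = E_\B \oplus E_\B^\perp$ over $\SSS^1_\eta \times \T$.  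Pulling this splitting back to $\SSS^1_\eta \times \interior{\D^2_z} \times \T$ and using that $H$ preserves it, the class $\alpha_z$ decomposes as
\[
  \alpha_z = [\pi^* E_\B, \pi^* E_\B; H_\B - z] + [\pi^* E_\B^\perp, \pi^* E_\B^\perp; H_\B^\perp - z].
\]

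Next I would dispose of the second summand: since $\sigma(H_\B^\perp(\eta,t))$ lies outside the closed disk $\D^2_z$, the map $H_\B^\perp - z$ is invertible on the entire space $\SSS^1_\eta \times \D^2_z \times \T$, and any such triple represents the zero class in compactly supported $K$-theory.  For the first summand, I would then collapse $H_\B$ to a constant scalar.  Since $\interior{\D^2_z}$ is simply connected and contains every compact spectrum $\sigma(H_\B(\eta,t))$, one can choose a continuous family $\phi_s \colon \C \to \C$ with $\phi_0 = \id$ and $\phi_1 \equiv \lambda_0$ for some fixed $\lambda_0 \in \interior{\D^2_z}$, such that $\phi_s(\sigma(H_\B(\eta,t))) \subset \interior{\D^2_z}$ throughout.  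Functional calculus then delivers a compactly supported homotopy from $H_\B - z$ to $\lambda_0 - z$.

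Finally I would identify $[\pi^*E_\B, \pi^*E_\B; \lambda_0 - z]$ with $\beta_z([E_\B])$: after the affine change of fiber coordinate $w := z - \lambda_0$, the class becomes $[\pi^*E_\B, \pi^*E_\B; -w]$, and the rotation $e^{i\pi s} w$ provides a homotopy to $[\pi^*E_\B, \pi^*E_\B; w \cdot \id]$, which is precisely $\beta_z([E_\B])$ by the standard formula for the Thom class of the trivial $\C$-bundle.  Applying $\beta_z^{-1}$ proves the first equality; the second then follows from the definitions of $F$ and $\I_\Bulk$.  The one mildly technical point is the construction of the functional-calculus homotopy $\phi_s$, but this reduces to path-connectedness of $\interior{\D^2_z}$ together with continuous variation of $\sigma(H_\B(\eta, t))$; the remaining steps are a direct transcription of the Atiyah--Bott calculation of the Bott/Thom class.
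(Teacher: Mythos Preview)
Your proposal is correct and is precisely the Atiyah--Bott computation the paper invokes; the paper gives no proof of Lemma~\ref{lem4} beyond the remark that it ``follows easily from Atiyah--Bott's proof of the Bott periodicity theorem \cite{AB64}'', and your spectral splitting $\underline{V}=E_\B\oplus E_\B^\perp$ followed by the functional-calculus contraction of $H_\B$ to a scalar is exactly how that reference is unpacked in this setting.
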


We denote by $F'$ the composite of the inverse of the Thom isomorphism $\beta_\eta^{-1} \colon$ $K^0_\cpt((\D^2_\eta \setminus \SSS^1_\eta) \times \gamma \times \T) \rightarrow K^0(\gamma \times \T)$ and $\ind_{\gamma \times \T} \colon K^0(\gamma \times \T) \rightarrow \Z$.
The relation between $\alpha_\eta$ and $\I_\GP$ is stated as follows.
\begin{lemma}\label{lem5}
	$\beta_\eta^{-1}(\alpha_\eta) = - [E_{\GP}] + [\underline{V}^{\oplus k}]$.
Thus we have
	$F'(\alpha_\eta) = - \I_\GP$.
\end{lemma}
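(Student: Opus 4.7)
The plan is to realize $\beta_\eta^{-1}(\alpha_\eta)$ as the negative of a family index of Toeplitz operators via the Atiyah--Bott technique from the elementary proof of the Bott periodicity theorem, and then to identify this family index with $[E_{\GP}] - [\underline{V}^{\oplus k}]$ using Lemma~\ref{lem3}.

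First, I would observe that $\alpha_\eta$ depends only on the boundary data: any two extensions of $H(\eta, t) - z$ from $\SSS^1_\eta \times \gamma \times \T$ to $\D^2_\eta \times \gamma \times \T$ agree on the boundary where the map is invertible, and so define the same class $[\underline{V}, \underline{V}; f|_{\interior{\D}^2_\eta \times \gamma \times \T}]$ in $K^0_\cpt(\interior{\D}^2_\eta \times \gamma \times \T)$. Under the identification with $K^0(\D^2_\eta \times \gamma \times \T, \SSS^1_\eta \times \gamma \times \T)$, this coincides with the image of $[H(\eta, t) - z] \in K^{-1}(\SSS^1_\eta \times \gamma \times \T)$ under the connecting homomorphism of the long exact sequence of the pair.

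Second, I would invoke the Atiyah--Bott Toeplitz index formula: for a continuous family $g \colon Y \times \SSS^1 \to \GL(V)$ of invertible endomorphisms, $\beta^{-1}$ applied to the corresponding class in $K^0_\cpt(Y \times \interior{\D}^2)$ equals $-\mathop{\mathrm{index}}(T_g) \in K^0(Y)$, where $T_g$ is the family of Toeplitz operators $P_{\geq 0} M_g P_{\geq 0}$ on $l^2(\Z_{\geq 0}; V)$ (identified with the Hardy space via Fourier). The sign is pinned down by the scalar test case $V = \C$, $g(\eta) = \eta$: the class in question is the Bott generator (mapping to $1$ under $\beta^{-1}$), while $T_\eta$ is the unilateral shift of Fredholm index $-1$. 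Applied to $Y = \gamma \times \T$ and $g = H(\eta, t) - z$, this yields $\beta_\eta^{-1}(\alpha_\eta) = -\mathop{\mathrm{index}}(\{H^\#(t) - z\}_{(z,t) \in \gamma \times \T})$.

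Third, I would compute this family index. The finite-rank difference
\begin{equation*}
(H^\#(t) - z) - H^\flat(z, t) = P_{[0, k-1]} (H(t) - z) P_{\geq 0}
\end{equation*}
shows that $\{H^\#(t) - z\}$ and $\{H^\flat(z, t)\}$ have the same family index in $K^0(\gamma \times \T)$, and by Lemma~\ref{lem3} this family index equals $[E_{\GP}] - [\underline{V}^{\oplus k}]$. Combining the three steps gives $\beta_\eta^{-1}(\alpha_\eta) = -[E_{\GP}] + [\underline{V}^{\oplus k}]$; applying $\ind_{\gamma \times \T}$ and invoking Remark~\ref{rem8} to conclude $\ind_{\gamma \times \T}([\underline{V}^{\oplus k}]) = 0$ yields $F'(\alpha_\eta) = -\I_{\GP}$. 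The main obstacle is Step 2: justifying the Atiyah--Bott Toeplitz formula with the correct sign in this family-parameter setting. Fortunately, the computation in the proof of Lemma~\ref{lem2} has already produced the explicit parametrix $P_{\geq 0} M_{s_k^{-1}} P_{\geq k}$ for $P_{\geq k} M_{s_k} P_{\geq 0}$, which is precisely the input needed to run the Atiyah--Bott clutching argument directly in our context.
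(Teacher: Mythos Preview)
Your proposal is correct and follows essentially the same approach as the paper: both identify $\alpha_\eta$ as the image of $[H(\eta,t)-z]\in K^{-1}(\SSS^1_\eta\times\gamma\times\T)$ under the boundary map, invoke the Atiyah--Bott relation between $\beta_\eta^{-1}$ and the Toeplitz family index (with the sign), and then pass from $H^\#(t)-z$ to $H^\flat(z,t)$ via the finite-rank perturbation to apply Lemma~\ref{lem3}. The paper packages your Steps~1--2 into a single commutative diagram cited from \cite{At68}, Section~7, whereas you spell out the sign check with the unilateral shift; also, your appeal to Remark~\ref{rem8} at the end is unnecessary since $\I_{\GP}$ is already defined as $\ind_{\gamma\times\T}([E_{\GP}]-[\underline{V}^{\oplus k}])$.
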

\begin{proof}
We have the following commutative diagram (see Sect.~7 of \cite{At68}),
\[\xymatrix{
K^{-1}(\SSS^1_\eta \times \gamma \times \T) \ar[r]^{\partial \hspace{8mm} } \ar[d]_{\mathcal{T}}&K^0(\D^2_\eta \times \gamma \times \T, \SSS^1_\eta \times \gamma \times \T)\ar[d]^{-\beta_\eta^{-1}}\\
[\gamma \times \T, \Fred(l^2(\Z_{\geq 0}))] \ar[r]^{\mathop{\mathrm{index}}}& K^0(\gamma \times \T)\\
}\]
where $\partial$ is the boundary map of the long exact sequence for the pair $(\SSS^1_\eta \times \D^2_z \times \T, \SSS^1_\eta \times \gamma \times \T)$, and a map $\mathcal{T}$ from $K^{-1}(\SSS^1_\eta \times \gamma \times \T)$ to $[\gamma \times \T, \Fred(l^2(\Z_{\geq 0}))]$ is given by taking a family of Toeplitz operators.
Note that $-\beta_\eta$ is given by taking a cup product with the element $[\underline{\C}, \underline{\C} ; \bar{z}] \in K^0_\cpt(\D^2_\eta \setminus \SSS^1_\eta)$.
By Lemma~\ref{lem1}, we have an element $[H(\eta, t) - z] \in K^{-1}(\SSS^1_\eta \times \gamma \times \T)$.
This element maps to $\alpha_\eta \in K^0_\cpt((\D^2_\eta \setminus \SSS^1_\eta) \times \gamma \times \T)$ by the boundary map $\partial$.
On the other hand, we have $\mathcal{T}([H(\eta, t) - z]) = \{ H^\#(t) - z \}_{\gamma \times \T}$. Now since $H^\flat(z,t)$ is a finite rank perturbation of $H^\#(t) - z$, we have $\{ H^\#(t) - z \}_{\gamma \times \T} = \{ H^\flat(z,t)\}_{\gamma \times \T}$ as an element of $[\gamma \times \T, \Fred(\mathcal{H})]$, and by Lemma~\ref{lem3}
we have $\mathop{\mathrm{index}}(\{ H^\flat(z,t)\}_{\gamma \times \T}) = [E_{\GP}] - [\underline{V}^{\oplus k}]$.
By the commutativity of the above diagram, we have $-\beta_\eta^{-1}(\alpha_\eta) = [E_{\GP}] - [\underline{V}^{\oplus k}]$.
\end{proof}
\begin{Proof}{\bf of Proposition~\ref{prop1}}
We now show the Proposition~\ref{prop1}.
By Lemma~\ref{lem1}, the endomorphism $f$ defines the following element $\alpha$ of the compactly supported $K$-group,
\begin{equation*}
	\alpha := [\underline{V}, \underline{V} ; f(\eta, z, t)] \in K^0_{\cpt}(X).
\end{equation*}
$\alpha$ gives a cobordism between $\alpha_z$ and $\alpha_\eta$, that is,
let $i \colon Y \hookrightarrow X$ be the inclusion, then we have $i^*(\alpha) = (\alpha_z, \alpha_\eta)$. By Lemma~\ref{lem4} and~\ref{lem5}, we have $(F \oplus F')(\alpha_z, \alpha_\eta) = -\I_\Bulk - \I_\GP$. Thus the following lemma, which states the cobordism invariance of the index, is suffice to prove Proposition~\ref{prop1}.
\begin{lemma}\label{lem6}
$(F \oplus F')(\alpha_z, \alpha_\eta) = 0$.
\end{lemma}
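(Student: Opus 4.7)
The plan is to apply the cobordism invariance of the topological index. Since $\alpha \in K^0_\cpt(X)$ extends the boundary class $(\alpha_z, \alpha_\eta) = i^*\alpha$ to the $5$-dimensional spin$^c$ manifold-with-boundary $X$, the induced sum of topological indices over the two boundary components of $Y$ should vanish, which is exactly the claim.

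To establish this cobordism invariance concretely, I would first embed $X$ neatly into a half-space. By a version of the Whitney embedding theorem for manifolds with boundary, for $n$ sufficiently large there exists a neat embedding $j \colon X \hookrightarrow \R^{2n} \times [0, \infty)$ with $j(Y) \subset \R^{2n} \times \{0\}$. Since $\dim X = 5$, the normal bundle of $j$ has dimension $2n-4$, which is even for $n \geq 2$, and one may equip it with a spin$^c$ structure compatible with the product spin$^c$ structure on $X$ and the standard one on $\R^{2n} \times [0, \infty)$.

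The Thom isomorphism and the collapse map then yield a push-forward $j_! \colon K^0_\cpt(X) \to K^0_\cpt(\R^{2n} \times [0, \infty))$. The target vanishes: the one-point compactification of $\R^{2n} \times [0, \infty)$ is homeomorphic to the closed $(2n+1)$-disk, hence contractible, so its reduced $K$-theory is zero. Thus $j_!(\alpha) = 0$. By the naturality of Thom classes and collapse maps under restriction to the boundary, the diagram
\[
\begin{CD}
K^0_\cpt(X) @>{j_!}>> K^0_\cpt(\R^{2n} \times [0, \infty)) \\
@V{i^*}VV @VVV \\
K^0_\cpt(Y) @>{(j|_Y)_!}>> K^0_\cpt(\R^{2n})
\end{CD}
\]
commutes, where the right vertical map is induced by the inclusion $\R^{2n} \hookrightarrow \R^{2n} \times [0, \infty)$ of the boundary. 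It follows that $(j|_Y)_!(i^*\alpha) = 0$ in $K^0_\cpt(\R^{2n}) \cong \Z$. Composing with the inverse Thom isomorphism to $\Z$ and using the disjoint-union decomposition of $Y$, this push-forward splits as the sum of topological indices of its two components. Unwinding definitions, and using that the determinant line bundles on $\SSS^1_\eta \times \T$ and $\gamma \times \T$ are trivial, this sum is precisely $(F \oplus F')(\alpha_z, \alpha_\eta)$, which is therefore zero.

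The main obstacle will be justifying the commutativity of the naturality square. One must check that the normal bundle of $j|_Y \colon Y \hookrightarrow \R^{2n}$ is canonically the restriction of the normal bundle of $j$ (the extra $[0, \infty)$-direction lies tangential to $X$ along $Y$, so it is absorbed into $T\R^{2n+1}_+$ rather than contributing to the normal bundle), and that the boundary spin$^c$ structure on $Y$ matches the induced structure on this normal bundle so that Thom classes and collapse maps are compatible under passage to the boundary. These are standard in the index-theoretic treatment of cobordism, but require careful bookkeeping given that $X$ and $Y$ are non-compact.
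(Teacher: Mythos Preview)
Your proposal is correct and follows essentially the same route as the paper: both embed $X$ neatly into a half-space $\R^{M}\times[0,\infty)$ via the Whitney theorem, use the vanishing of $K^0_\cpt$ of the half-space (its one-point compactification being a contractible ball), and deduce the result from the commutative square relating push-forward and boundary restriction. Your write-up is slightly more explicit about why the right triangle commutes (triviality of the determinant line bundles, compatibility of normal-bundle spin$^c$ structures under restriction to $Y$), which the paper takes for granted.
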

\begin{proof}
By the Whitney embedding theorem, there exists a neat embedding $E \colon X \rightarrow \R^M \times [0, +\infty)$ for sufficiently large even integer $M$, such that the boundary $Y$ of $X$ maps to $\R^M \times \{0\}$.\footnote{For the proof of the Whitney embedding theorem for an embedding of a non-compact smooth manifold with boundary into a half-space, see \cite{Ad93,Hir94}, for example.
It is also easy to construct such embedding explicitly for our space $X$ (see Sect. \ref{sec:7}).} We fix a spin$^c$ structure on $\R^M \times [0, +\infty)$. Then a normal bundle of this embedding has a naturally induced spin$^c$ structure.
Let $i' \colon \R^M \hookrightarrow \R^M \times [0, +\infty)$ be an inclusion given by $i'(x) = (x, 0)$. We have the following commutative diagram.
\[\xymatrix{
K^0_\cpt(X) \ar[rr]^{i^*} \ar[d]^{E_!}& & K^0_\cpt(Y) \ar[rr]^{F \oplus F'} \ar[d]_{(E|Y)_!} & &\Z \\
K^0_\cpt(\R^M \times [0, +\infty)) \ar[rr]^{i'^*} & & K^0_\cpt(\R^M) \ar[rru]_{\beta^{-1}}
}\]
where $\beta$ is the Thom isomorphism.
We have $i^*(\alpha) = (\alpha_z, \alpha_\eta)$ and $K^0_\cpt(\R^M \times [0, +\infty)) \cong \widetilde{K}^0(\mathbb{B}^{M+1}) = \{0\}$, where $\mathbb{B}^{M+1}$ is the $(M+1)$-dimensional closed ball which is contractible.  By the commutativity of this diagram, we have $(F \oplus F')(\alpha_z, \alpha_\eta) = 0$.
\end{proof}
\end{Proof}
%%%%%%%%%%%%%%%%%%%%%%%%%%%%%%%%%%%%%%%%%%%%%%%%%%%
\subsection{$\I_{\GP}$ and edge index}
\label{subsec:6.2}
We next show that the minus of $\I_{\GP}$ coincides with the edge index.
In order to prove this, we use a localization argument for a $K$-class. We localize the support of a $K$-class near the crossing points of the spectra of edge Hamiltonians and the loop $l$ which we take in Sect.~\ref{sec:4}. We then use the excision property of the index.
\begin{proposition}\label{prop2}
	$-\I_{\GP} = \I_{\Edge}$.
\end{proposition}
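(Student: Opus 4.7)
The plan is to exploit the extension $\tilde{E}_{\GP}$ from Remark~\ref{rem7} together with a natural trivialization of $\tilde{E}_{\GP}$ on the complement of the edge spectrum, so that $\I_{\GP}$ can be expressed as a signed count of local contributions coming from the crossings $u_1, \ldots, u_m$ where $l$ meets the edge spectrum. By Remarks~\ref{rem7} and~\ref{rem8}, the class $[E_{\GP}] - [\underline{V}^{\oplus k}]$ on $\gamma \times \T$ extends canonically to $U := \{(z,t) \in \C \times \T \ | \ z \notin \sigma(H(t))\}$, and $\I_{\GP}$ coincides with the first Chern number $c_1(E_{\GP})[\gamma \times \T]$. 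Set $W := \{(z,t) \in U \ | \ z \notin \sigma(H^{\#}(t))\}$, and consider the evaluation map $\mathrm{ev}_{z,t} \colon (\tilde{E}_{\GP})_{z,t} \to V^{\oplus k}$, $\varphi \mapsto (\varphi_0, \ldots, \varphi_{k-1})$. A short calculation identifies $\Ker(\mathrm{ev}_{z,t})$ with $\Ker(H^{\#}(t) - z)$ via the shift identification $l^2(\Z_{\geq k}; V) \cong l^2(\Z_{\geq 0}; V)$; since source and target of $\mathrm{ev}$ both have dimension $kN$, it is a bundle isomorphism on $W$, and so $[\tilde{E}_{\GP}] - [\underline{V}^{\oplus k}]$ restricts to zero in $K^0(W)$.

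The complement $U \setminus W$ is the edge spectrum, a one-dimensional subset of $U$ consisting of the graphs $(\lambda_j(t), t)$ of the discrete eigenvalue branches of $H^{\#}(t)$ in the bulk gap. By the construction in Sect.~\ref{sec:4}, the loop $l$ lies in $\D^2_z \times \T \subset U$ and meets the edge spectrum transversally precisely at the crossings $u_1, \ldots, u_m$, each carrying a sign determined by the direction of the corresponding $\lambda_j$; the resulting signed intersection number equals $\I_{\Edge}$ by definition of the spectral flow. Applying the excision property of $K$-theory -- or equivalently, deforming $\gamma \times \T$ inside $U$ to a 2-cycle concentrated in tubular neighborhoods of the $u_j$ and using Stokes for the Chern form $c_1(\tilde{E}_{\GP})$ -- I would localize the computation of $\I_{\GP}$ to a sum of local contributions around the $u_j$.

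For each $u_j$, a local model in which a single eigenvalue branch of $H^{\#}(t)$ crosses $l$ transversally identifies the local Chern contribution as $\pm 1$, with sign matching the convention used in Sect.~\ref{sec:4} to define the spectral flow. Summing yields $\I_{\GP} = \sum_j \pm 1 = \I_{\Edge}$.

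The main obstacle lies in the localization step. Because $U$ has an ``inner boundary'' near the bulk spectrum (where $\tilde{E}_{\GP}$ is not defined), any cobordism of $\gamma \times \T$ to a 2-cycle around the $u_j$ must stay inside $U$ and a priori picks up additional boundary contributions from the bulk-spectrum side. Showing that these extra terms are absent -- or absorbed by the excision argument -- and carefully matching the sign of each local Chern contribution with the spectral-flow sign at the corresponding crossing are the delicate technical points that must be addressed.
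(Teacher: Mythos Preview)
Your overall strategy coincides with the paper's: trivialize $\tilde{E}_{\GP}$ away from the edge spectrum by a bundle map to $\underline{V}^{\oplus k}$, deform $\gamma\times\T$ inside $U$, and localize the index at the crossing points. Your evaluation map $\mathrm{ev}$ is a legitimate variant of the paper's map $g_{z,t}=(H^\#(t)-z)|_{(E_{\GP})_{z,t}}$; the paper shows $\Ker g_{z,t}=\Ker(H^\#(t)-z)$ directly, while your shift argument gives the same conclusion for $\mathrm{ev}$.

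The obstacle you flag is real in your formulation but dissolves once you adopt the paper's framing. You try to deform $\gamma\times\T$ to a $2$-cycle concentrated near the $u_a$, which forces you to worry about the inner boundary near the bulk spectrum. The paper instead deforms $\gamma\times\T$ only to another \emph{closed torus} $\tilde{\T}\subset U$ whose intersection with $\R\times\T$ is the staircase loop $l$; no boundary terms arise because $\tilde{\T}$ is still a closed surface isotopic to $\gamma\times\T$ inside $U$. On $\tilde{\T}$ one then represents the class as $[\tilde{E}_{\GP}|_{\tilde{\T}},\underline{V}^{\oplus k};g]$, whose support is exactly $\{u_1,\dots,u_m\}$, and excision in compactly supported $K$-theory gives the localization for free. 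In short, you do not need to shrink the cycle; you only need to move it so that the trivializing map becomes compactly supported on it.

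Your local computation is too informal and your transversality/simple-crossing assumption is unnecessary. The paper's device here is to observe that near each $u_a$ the torus $\tilde{\T}$ sits inside a single slice $\C\times\{t_i\}$, so $\tilde{E}_{\GP}|_{U_a}$ inherits a holomorphic structure (via Riesz projections for the self-adjoint operator built from $H^\flat$), and $g$ (or your $\mathrm{ev}$) is a holomorphic bundle map. Taking determinants reduces the local $K$-class to $[\underline{\C},\underline{\C};z^{r_a}]$, which contributes $\pm r_a$ according to whether the orientation on $U_a$ induced from $\tilde{\T}$ agrees with the complex orientation from $\C$; this sign is exactly the sign attached to the crossing in the spectral-flow count. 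This handles arbitrary multiplicity and makes the sign matching precise, which your ``local model with a single transversal eigenvalue branch'' does not.
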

We identify $V^{\oplus k}$ with the image of the projection $1 - P_{\geq k}$, that is, the closed subspace of $l^2(\Z_{\geq 0} ; V)$ consists of sequences $\varphi = \{ \varphi_n \}$ which satisfies $\varphi_n = 0$ for $n \geq k$.
For $(z, t)$ in $\gamma \times \T$, we consider the following map,
\begin{equation*}
	g_{z, t} := (H^\#(t) - z)|_{(E_{\GP})_{z, t}} \colon (E_{\GP})_{z, t} \rightarrow V^{\oplus k}.
\end{equation*}
By the definition of $(E_{\GP})_{z, t}$, the space $(H^\#(t) - z)((E_{\GP})_{z, t})$ is contained in $V^{\oplus k}$. Therefore we have a bundle homomorphism $g \colon \tilde{E}_{\GP}|_{\tilde{\T}}\rightarrow \underline{V}^{\oplus k}$.
\begin{lemma}\label{lem7}
For $(z, t) \in \gamma \times \T$, we have,
$\Ker g_{z,t} = \Ker (H^\#(t) - z)$.
\end{lemma}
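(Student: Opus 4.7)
The plan is to prove the two inclusions by simply unwinding the definitions; both fall out from the fact that the defining condition for $(E_{\GP})_{z,t}$ is a relaxation of the defining condition for $\Ker(H^\#(t) - z)$.

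First I would observe, as in Remark~\ref{rem6}, that a vector $\varphi \in l^2(\Z_{\geq 0}; V)$ lies in $(E_{\GP})_{z,t}$ exactly when $((H(t) - z)\varphi)_n = 0$ for all $n \geq k$. On the other hand, since $\varphi \in l^2(\Z_{\geq 0}; V)$ implies $z\varphi \in l^2(\Z_{\geq 0}; V)$, we have
\begin{equation*}
(H^\#(t) - z)\varphi = P_{\geq 0}(H(t) - z)\varphi,
\end{equation*}
so $\varphi \in \Ker(H^\#(t) - z)$ is equivalent to $((H(t) - z)\varphi)_n = 0$ for every $n \geq 0$.

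From these two descriptions, the inclusion $\Ker(H^\#(t) - z) \subseteq (E_{\GP})_{z,t}$ is immediate: vanishing of all components for $n \geq 0$ certainly implies vanishing for $n \geq k$. Next, by the very definition of $g_{z,t}$ as the restriction of $H^\#(t) - z$ to the subspace $(E_{\GP})_{z,t}$, we have for any $\varphi \in (E_{\GP})_{z,t}$ the equivalence
\begin{equation*}
g_{z,t}\varphi = 0 \ \Longleftrightarrow\  (H^\#(t) - z)\varphi = 0.
\end{equation*}
Combining this with the inclusion above yields
\begin{equation*}
\Ker g_{z,t} = (E_{\GP})_{z,t} \cap \Ker(H^\#(t) - z) = \Ker(H^\#(t) - z),
\end{equation*}
which is the claim.

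I do not expect any real obstacle here; the lemma is a bookkeeping step that packages the identification of edge states inside Graf--Porta's bundle, to be used in the localization argument of Sect.~\ref{subsec:6.2}. The only point that deserves a line of care is checking that $(H^\#(t) - z)\varphi$ and $P_{\geq 0}(H(t) - z)\varphi$ agree on $l^2(\Z_{\geq 0}; V)$, i.e., that the outer projection in the definition of $H^\#(t)$ is harmless when $\varphi$ already satisfies $P_{\geq 0}\varphi = \varphi$.
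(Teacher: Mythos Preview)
Your proof is correct and follows essentially the same approach as the paper's: both arguments unwind the definitions to see that $\varphi \in \Ker g_{z,t}$ amounts to $((H(t)-z)\varphi)_n = 0$ for $n \geq k$ (membership in $(E_{\GP})_{z,t}$) together with the same vanishing for $0 \leq n \leq k-1$ (the condition $g_{z,t}\varphi = 0$), which is exactly $\varphi \in \Ker(H^\#(t)-z)$. Your phrasing via the inclusion $\Ker(H^\#(t)-z) \subseteq (E_{\GP})_{z,t}$ and the restriction identity is just a slight reorganization of the same content.
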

\begin{proof}
$\varphi$ is an element of $\Ker g_{z,t}$ if and only if $\varphi$ is an element of $l^2(\Z_{\geq 0} ; V)$ which satisfies
$(H(t)\varphi)_n = z \varphi_n$ for $n \geq k$ (see Remark~\ref{rem6}), and
$(H(t)\varphi)_n = z \varphi_n$ for $0 \leq n \leq k-1$ (since $\varphi$ is an element of $\Ker g_{z,t}$). This is equivalent to saying that $\varphi$ is an element of $\Ker(H^\#(t)-z)$.
\end{proof}
\begin{remark}\label{rem10}
If there are no edge states, then $g$ is a bundle isomorphism, and we have a trivialization of the bundle $E_\GP$.
\end{remark}
We now deform the torus $\gamma \times \T$, so that, instead of the loop $\{ \mu \} \times \T$, the deformed torus intersects $\R \times \T \subset \C \times \T$ on the loop $l$ which we take in Sect.~\ref{sec:4}, and that, at each crossing point, a neighborhood of the point in the deformed torus is contained in $\C \times \{ t \}$ for some $t$ in $\T$. We denote this deformed torus by $\tilde{\T}$ (see Figure $2$).
Since $l$ is contained in the resolvent set of bulk Hamiltonians, such $\tilde{\T}$ is contained in the set $\{ (z,t) \in \C \times \T \ | \ z \notin \sigma(H(t)) \}$.
Thus we can consider the restriction of $\tilde{E}_{\GP}$ onto $\tilde{\T}$ (see Remark~\ref{rem7}).
$\tilde{\T}$ is deformed continuously to $\gamma \times \T$ (see Figure $2$, and consider the {\em ``crushing the ledge'' map}).
By using this deformation,
$\Tilde{\T}$ has a spin$^c$ structure naturally induced by that of $\gamma \times \T$.
By using this spin$^c$ structure, we have the map $\ind_{\tilde{\T}} \colon K^0(\tilde{\T}) \rightarrow \Z$.
\begin{lemma}\label{lem8}
	$\I_{\GP} = \ind_{\tilde{\T}}([\tilde{E}_{\GP} |_{\tilde{\T}}, \underline{V}^{\oplus k} ; g])$.
\end{lemma}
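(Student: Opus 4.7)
The plan is to recognize both sides of the asserted equality as the same $K$-theoretic index, transported across the deformation from $\gamma \times \T$ to $\tilde{\T}$ inside the common resolvent set $\Omega := \{(z,t) \in \C \times \T : z \notin \sigma(H(t))\}$. First, since $\tilde{\T}$ is compact, the triple description of the $K$-group collapses to
$$[\tilde{E}_{\GP}|_{\tilde{\T}}, \underline{V}^{\oplus k}; g] = [\tilde{E}_{\GP}|_{\tilde{\T}}] - [\underline{V}^{\oplus k}] \in K^0(\tilde{\T}),$$
so what has to be proved is
$$\ind_{\gamma \times \T}\bigl([E_{\GP}] - [\underline{V}^{\oplus k}]\bigr) = \ind_{\tilde{\T}}\bigl([\tilde{E}_{\GP}|_{\tilde{\T}}] - [\underline{V}^{\oplus k}]\bigr).$$
The bundle homomorphism $g$ plays no role in this class; its real purpose is reserved for the subsequent step of the paper, where it will be used to localize the class near the crossing points.

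Next I would formalize the crushing-the-ledge deformation as a smooth homotopy $\Phi \colon \tilde{\T} \times [0,1] \to \Omega$ joining the inclusion $\tilde{\T} \hookrightarrow \Omega$ at time $0$ to a smooth map $\phi \colon \tilde{\T} \to \gamma \times \T \subset \Omega$ at time $1$. By the very way the spin$^c$ structure on $\tilde{\T}$ was set up, $\phi$ may (after a small perturbation) be taken to be an orientation-preserving spin$^c$ diffeomorphism. Pulling $\tilde{E}_{\GP}$ back along $\Phi$ exhibits an isomorphism $\tilde{E}_{\GP}|_{\tilde{\T}} \cong \phi^* E_{\GP}$ of bundles on $\tilde{\T}$, and combined with $\phi^*(\underline{V}^{\oplus k}) = \underline{V}^{\oplus k}$ this gives
$$[\tilde{E}_{\GP}|_{\tilde{\T}}] - [\underline{V}^{\oplus k}] = \phi^*\bigl([E_{\GP}] - [\underline{V}^{\oplus k}]\bigr) \in K^0(\tilde{\T}).$$

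I would then finish by naturality of $\ind_{(-)}$ under orientation-preserving spin$^c$ diffeomorphisms: since the construction of $\ind_X$ factors through the Thom isomorphism on $TX$ and the Atiyah--Singer topological index (Remark~\ref{rem1}), it intertwines $\phi^*$ with taking the index, that is $\ind_{\tilde{\T}}(\phi^* x) = \ind_{\gamma \times \T}(x)$ for every $x \in K^0(\gamma \times \T)$. Specializing to $x = [E_{\GP}] - [\underline{V}^{\oplus k}] = [E_{\GP}] - [\underline{V}^{\oplus k}]$ and applying the definition of $\I_{\GP}$ in Definition~\ref{GPindex} yields the lemma.

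The main obstacle is the rigorous justification that the crushing map is legitimately a spin$^c$ diffeomorphism: the word ``crushing'' suggests that some ledges of $\tilde{\T}$ are collapsed, which would only produce a continuous homotopy equivalence. The right viewpoint is that $\tilde{\T}$ and $\gamma \times \T$ are both smoothly embedded tori inside $\Omega$ and are ambient isotopic there, so the crushing map can be replaced by a genuine orientation-preserving diffeomorphism realized by the isotopy. Equivalently, one can avoid $\phi$ altogether and argue by a cobordism inside $\Omega$: the region swept by the isotopy is a compact spin$^c$ $3$-manifold whose boundary is $\tilde{\T} \sqcup (-\gamma \times \T)$ and over which $\tilde{E}_{\GP} - \underline{V}^{\oplus k}$ extends, so the cobordism invariance of the index already used in Section~\ref{subsec:6.1} delivers the equality directly.
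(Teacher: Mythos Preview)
Your proposal is correct and follows essentially the same approach as the paper: both observe that on the compact torus $\tilde{\T}$ the triple class $[\tilde{E}_{\GP}|_{\tilde{\T}}, \underline{V}^{\oplus k}; g]$ reduces to $[\tilde{E}_{\GP}|_{\tilde{\T}}] - [\underline{V}^{\oplus k}]$, and then invoke naturality of the index under the deformation from $\gamma \times \T$ to $\tilde{\T}$ inside the resolvent set. The paper compresses this into a single phrase ``by the naturality of the index,'' whereas you unpack the technical content (homotopy, spin$^c$ diffeomorphism versus cobordism) that justifies it.
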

\begin{proof}
By the naturality of the index, the integer $\ind_{\gamma \times \T}([E_{\GP}] - [\underline{V}^{\oplus k}])$ coincides with $\ind_{\tilde{\T}}([\tilde{E}_{\GP} |_{\tilde{\T}}] - [\underline{V}^{\oplus k}])$. Thus, we have
\begin{equation*}
	\I_{\GP} = \ind_{\tilde{\T}}([\tilde{E}_{\GP} |_{\tilde{\T}}] - [\underline{V}^{\oplus k}]) =  \ind_{\tilde{\T}}([\tilde{E}_{\GP} |_{\tilde{\T}}, \underline{V}^{\oplus k} ;  0]) = \ind_{\tilde{\T}}([\tilde{E}_{\GP} |_{\tilde{\T}}, \underline{V}^{\oplus k} ; g]).
\end{equation*}
\end{proof}

By Lemma~\ref{lem7}, the support of the bundle homomorphism $g \colon \tilde{E}_{\GP} |_{\tilde{\T}} \rightarrow \underline{V}^{\oplus k}$ is contained in the spectrum of edge Hamiltonians, and coincides with the set of crossing points $\{ u_1, \ldots, u_m \}$.
For each crossing point $u_a$, we take an open disk neighborhood $U_a$ of $u_a$ in $\tilde{\T}$ which is also contained in $\C \times \{ t \}$ for some $t$ in $\T$.
We take $U_a$ small enough so that $U_a$ does not intersect one another.
Then we have following elements of compactly supported $K$-groups,
\begin{equation*}
	[\tilde{E}_{\GP} |_{U_a}, \underline{V}^{\oplus k} ; g] \in K^0_\cpt (U_a), \hspace{2mm} a=1, \ldots, m.
\end{equation*}
For each $a \in \{ 1, \cdots, m\}$, we consider a spin$^c$ structure on $U_a$ which is the restriction of the spin$^c$ structure of $\tilde{\T}$.
By using this spin$^c$ structure, we have a homomorphism $\ind_{U_a, \tilde{\T}} \colon K^0_\cpt (U_a) \rightarrow \Z$. We later consider another spin$^c$ structure on $U_a$, so we write subscript $(U_a, \tilde{\T})$ in order to indicate which spin$^c$ structure is used.
By the excision property of the index, we have,
\begin{equation}\label{eq1}
\ind_{\tilde{\T}}([\tilde{E}_{\GP} |_{\tilde{\T}}, \underline{V}^{\oplus k} ; g]) = \sum_{a=1}^m \ind_{U_a, \tilde{\T}}([\tilde{E}_{\GP} |_{U_a}, \underline{V}^{\oplus k} ; g]).
\end{equation}

In order to prove Proposition~\ref{prop2}, it is enough to show the following Lemma.
\begin{lemma}\label{lem9}
For each crossing point $u_a$, the following holds.
\begin{enumerate}
\renewcommand{\labelenumi}{(\arabic{enumi})}
\item If $u_a$ is contained in the interval $\mathcal{L}(\tilde{c}_i, \tilde{c}_{i+1}) \times \{ t_i \}$ where $\tilde{c}_i > \tilde{c}_{i+1}$ and if its multiplicity is $r_a$, then $\ind_{U_a, \tilde{\T}}([\tilde{E}_{\GP} |_{U_a}, \underline{V}^{\oplus k} ; g]) = + r_a$.
\item If $u_a$ is contained in the interval $\mathcal{L}(\tilde{c}_i, \tilde{c}_{i+1}) \times \{ t_i \}$ where $\tilde{c}_i < \tilde{c}_{i+1}$ and if its multiplicity is $r_a$, then $\ind_{U_a, \tilde{\T}}([\tilde{E}_{\GP} |_{U_a}, \underline{V}^{\oplus k} ; g]) = - r_a$.
\end{enumerate}
\end{lemma}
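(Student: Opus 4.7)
The plan is to reduce the local $K$-class at each crossing $u_a$ to a Bott-type class on a disk in $\C$, compute it as $\pm r_a$, and identify the sign by comparing the spin$^c$ orientation on $U_a$ inherited from $\tilde{\T}$ with the complex orientation on $U_a \subset \C \times \{t_i\}$.

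The first step is a local decomposition. By Lemma~\ref{lem7}, on $U_a$ the map $g_{z,t_i}$ is an isomorphism for $z \neq z_a$, while its kernel at $u_a$ is $K_0 := \Ker(H^\#(t_i) - z_a)$ of dimension $r_a$; the cokernel $C_0$ has the same dimension. I would choose smooth rank $r_a$ subbundles $K \subset \tilde{E}_{\GP}|_{U_a}$ extending $K_0$ (for example, the spectral subbundle of $g^*g$ corresponding to small eigenvalues) and $C \subset \underline{V}^{\oplus k}|_{U_a}$ a constant complement to $\Image(g_{u_a})$ in $V^{\oplus k}$. The complementary block of $g$ remains invertible near $u_a$, and a standard linear homotopy zeroing out the off-diagonal blocks (valid on a slightly smaller disk by continuity) yields
\[
[\tilde{E}_{\GP}|_{U_a}, \underline{V}^{\oplus k}|_{U_a}; g] = [K, C; \bar{g}] \in K^0_\cpt(U_a),
\]
where $\bar{g} : K \to C$ is the induced map.

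Next, I would compute the local Bott class. For any $\varphi \in K_0$, since $H^\#(t_i)\varphi = z_a \varphi$,
\[
g_{z,t_i}(\varphi) = (H^\#(t_i) - z)\varphi = (z_a - z)\varphi.
\]
Trivializing $K$ locally by the constant family $\underline{K_0}$ and identifying $C$ with $K_0$ via any complex-linear isomorphism (the choice is irrelevant since $\GL(r_a, \C)$ is connected), the map $\bar{g}$ has leading form $(z_a - z)\, I_{r_a}$; higher-order terms factor out as a bundle automorphism homotopic to the identity on a small disk. Since $-1$ and $1$ are connected in $\GL(1, \C)$,
\[
[K, C; \bar{g}] = r_a \cdot [\underline{\C}, \underline{\C}; z - z_a] \in K^0_\cpt(U_a),
\]
which maps to $+r_a$ under the inverse Thom isomorphism associated to the \emph{complex} orientation on $U_a \subset \C \times \{t_i\}$.

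Finally, I would compare the spin$^c$ orientation on $U_a$ inherited from $\tilde{\T}$ with this complex one. Viewing $\tilde{\T}$ as the boundary of a $3$-dimensional region $\tilde{X} \subset \C \times \T$ (a deformation of $D \times \T$, where $D$ is the interior of $\gamma$), equipped with the standard orientation $\partial_x \wedge \partial_y \wedge \partial_t$ on $\C \times \T$ and the outward-normal-first convention, one has $\tilde{X} \cap (\R \times \T)$ equal to the region enclosed by $l$, with interior on the left of $l$ by the counter-clockwise convention. In case (I), $c_i > c_{i+1}$: the horizontal piece through $u_a$ is traversed in the $-\partial_x$ direction, so the enclosed region lies at $t < t_i$, and the outward normal to $\tilde{X}$ at the shelf $U_a$ is $+\partial_t$; solving $\partial_t \wedge e_1 \wedge e_2 = \partial_x \wedge \partial_y \wedge \partial_t$ yields $e_1 \wedge e_2 = \partial_x \wedge \partial_y$, the complex orientation. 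Combined with the previous step, $\ind_{U_a,\tilde{\T}}([\tilde{E}_{\GP}|_{U_a}, \underline{V}^{\oplus k}; g]) = +r_a$. In case (II), $c_i < c_{i+1}$, the analogous analysis gives outward normal $-\partial_t$, the induced orientation on $U_a$ is the opposite of the complex one, and the index equals $-r_a$.

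The main obstacle lies in the orientation bookkeeping of the last step: one must identify the $3$-dimensional region bounded by $\tilde{\T}$, relate its intersection with $\R \times \T$ to $l$ and its interior, and correctly translate this into the sign of the boundary orientation at each shelf~$U_a$. The local algebraic calculation in the first two steps is routine, but the sign in the final answer depends entirely on this geometric identification.
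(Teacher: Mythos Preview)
Your overall strategy---local block reduction of the $K$-class followed by an orientation comparison---is sound, and your orientation bookkeeping in the last step is more explicit than the paper's one-line assertion. The gap is in your second step. The displayed computation $g_{z,t_i}(\varphi)=(z_a-z)\varphi$ for $\varphi\in K_0$ is ill-posed: the domain of $g_{z,t_i}$ is $(\tilde E_{\GP})_{z,t_i}$, and for $z\neq z_a$ an eigenvector $\varphi\in K_0$ generally does \emph{not} lie in that fiber (indeed $P_{\geq k}(H(t_i)-z)\varphi=(z_a-z)P_{\geq k}\varphi$, which is nonzero unless $\varphi$ happens to be supported on $\{0,\dots,k-1\}$). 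So this formula does not compute $\bar g$, and the claim that $\bar g$ has ``leading form $(z_a-z)I_{r_a}$ with higher-order remainder'' is not yet justified; a priori there could be a $\bar z$-linear contribution at first order, which would destroy the factorization $\bar g=(z_a-z)(I+o(1))$ and hence the identification with a Bott class. The honest computation goes as follows: for a local frame $e_j(z)$ of $K$ with $e_j(z_a)\in K_0$, the first-order part of $g_z(e_j(z))=(H^\#(t_i)-z)e_j(z)$ is $-(z-z_a)e_j(z_a)+(H^\#(t_i)-z_a)\,De_j$, where $De_j$ is the (real) differential of the frame. The second summand lies in $\Image(H^\#(t_i)-z_a)=K_0^\perp$ by self-adjointness, and one checks that the natural surjection $V^{\oplus k}\to\Coker(g_{z_a})$ factors through $l^2/\Image(H^\#(t_i)-z_a)\cong K_0$; hence that summand projects to zero in $C_0$. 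Only with this step do both the $\partial_z$- and the $\partial_{\bar z}$-parts of the correction vanish, giving $\bar g=-(z-z_a)I+O(|z-z_a|^2)$ and allowing the factorization you want.

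The paper takes a different route to the same local identity. It first proves (Lemma~\ref{lem10}) that $\tilde E_{\GP}|_{U_a}$ is a \emph{holomorphic} bundle, by realising its fibers via the Riesz projection of the self-adjoint operator $\bigl(\begin{smallmatrix}0&H^\flat(z,t)^*\\ H^\flat(z,t)&0\end{smallmatrix}\bigr)$, which depends holomorphically on $z$. Then $g$ and hence $\det g$ are holomorphic; trivialising $\det\tilde E_{\GP}|_{U_a}$ holomorphically and writing $\det g=z^{r_a}h(z)$ with $h$ nowhere vanishing yields (Lemma~\ref{lem11}) the equality $[\tilde E_{\GP}|_{U_a},\underline V^{\oplus k};g]=[\underline\C,\underline\C;z^{r_a}]$ directly, after which the orientation comparison finishes the proof. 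Your approach trades the holomorphic-bundle argument for a bare-hands first-variation computation; that is legitimate, but you must supply the self-adjointness step above to close it.
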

In order to prove Lemma~\ref{lem9}, we need some Lemmas.
We now consider on $U_a$ a complex structure induced by the inclusion $U_a \subset \C \times \{ t \} \cong \C$.
\begin{lemma}\label{lem10}
$\tilde{E}_{\GP}|_{U_a} \rightarrow U_a$ is a holomorphic vector bundle.
\end{lemma}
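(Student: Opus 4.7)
The strategy is to exhibit, on a neighborhood of any chosen point $z_0 \in U_a$, a holomorphic family of bounded idempotents on $l^2(\Z_{\geq 0}; V)$ whose range is the fiber $(E_{\GP})_{z,t_a}$, where $t_a$ denotes the fixed $\T$-coordinate of $U_a$. Since $U_a$ lies in the slice $\C \times \{t_a\}$ and carries the induced complex structure, such a family automatically produces a local holomorphic frame, and hence a holomorphic bundle structure on $\tilde{E}_{\GP}|_{U_a}$.

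First, I would observe that on $U_a$ the operator
\[
A(z) := P_{\geq k}(H(t_a) - z) P_{\geq 0} \colon l^2(\Z_{\geq 0}; V) \rightarrow l^2(\Z_{\geq k}; V)
\]
is affine in $z$, hence norm-holomorphic. By Lemma~\ref{lem2}, $A(z)$ is surjective for every $z \in U_a$, and by definition $\Ker A(z) = (E_{\GP})_{z,t_a}$.

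Next, fix $z_0 \in U_a$ and pick any bounded right inverse $R_0$ of $A(z_0)$, which exists because $A(z_0)$ is a surjection between Hilbert spaces. For $z$ in a small enough neighborhood $W$ of $z_0$, the Neumann series applied to
\[
A(z) R_0 = I - (z - z_0)\, P_{\geq k} R_0
\]
furnishes a holomorphic inverse, so $R(z) := R_0 (A(z) R_0)^{-1}$ is a holomorphic right inverse of $A(z)$ on $W$. Setting $P(z) := I - R(z) A(z)$, the identity $A(z) R(z) = I$ gives $P(z)^2 = P(z)$, $A(z) P(z) = 0$, and $P(z)|_{\Ker A(z)} = \mathrm{id}$; hence $P(z)$ is a bounded projection onto $\Ker A(z) = (E_{\GP})_{z,t_a}$ depending holomorphically on $z \in W$. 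Applying $P(z)$ to a fixed basis of $\Ker A(z_0)$ produces $kN$ holomorphic sections which, by norm-continuity of $P(z)$ together with the constancy $\dim \Ker A(z) = kN$ (Lemma~\ref{lem3}), form a holomorphic frame of $\tilde{E}_{\GP}|_W$; the transition functions between two such frames are holomorphic by construction.

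There is no serious obstacle: the argument is standard analytic-perturbation theory, made especially clean by the affine dependence of $A(z)$ on $z$, which reduces the construction of $R(z)$ to a geometric-series computation. The only minor points to verify are that pushing a basis of $\Ker A(z_0)$ through $P(z)$ indeed yields a frame near $z_0$ (immediate from norm-continuity of $P(z)$ and the fiber-dimension count of Lemma~\ref{lem3}) and that the resulting local frames patch with holomorphic transitions (automatic from the formulas).
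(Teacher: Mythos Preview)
Your argument is correct and follows a different path from the paper's. The paper doubles the operator to the self-adjoint
$D(z,t) = \left(\begin{smallmatrix} 0 & H^\flat(z,t)^* \\ H^\flat(z,t) & 0 \end{smallmatrix}\right)$,
encloses its isolated eigenvalue $0$ by a contour $C$, and uses the Riesz projection $p(z) = \frac{1}{2\pi i}\int_C (\lambda - D(z,t))^{-1}\,d\lambda$ to obtain a family of projections with range $(E_{\GP})_{z,t} \oplus V^{\oplus k}$, asserting this family is holomorphic. You instead exploit the surjectivity of $A(z)$ directly: a Neumann series turns a single bounded right inverse into a holomorphic family $R(z)$, and $P(z) = I - R(z)A(z)$ is then a holomorphic idempotent onto $\Ker A(z)$. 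Your route is more elementary---no spectral theory is invoked---and, because $A(z)$ is genuinely holomorphic in $z$ while $D(z,t)$ depends on $\bar z$ through the adjoint $H^\flat(z,t)^*$, it sidesteps a point the paper leaves implicit, namely why a Riesz projection built from a non-holomorphic resolvent nonetheless has holomorphic range (this is recoverable since the $V^{\oplus k}$ summand is $z$-independent, but your argument needs no such patch). The paper's approach, on the other hand, sits inside the standard spectral-projection framework and would adapt with less modification were the $z$-dependence not affine.
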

\begin{proof}
Let us consider the following operator,
$$
D(z,t) :=
\left( 
    \begin{array}{cc}
           0&H^\flat(z,t)^*\\
           H^\flat(z,t)&0
    \end{array}
\right).
$$
$D(z,t)$ is a bounded linear self-adjoint operator on $l^2(\Z_{\geq 0} ; V) \oplus l^2(\Z_{\geq 0} ; V)$.
We have $\Ker D(z,t) = \Ker   H^\flat(z,t) \oplus \Ker H^\flat(z,t)^* \cong (E_\GP)_{z, t} \oplus V^{\oplus k}$. By Lemma~\ref{lem3}, the rank of $\Ker D(z,t)$ is constant on $U_a$.
By considering the spectral decomposition of the self-adjoint operator $D(z,t)$, it is easy to see that we can take a positively oriented smooth simple closed curve $C$ in $\C$, which does not intersects with $\sigma(D(z,t))$ and contains just the zero eigenvalue inside $C$ for any $(z, t) \in U_a$.
Then Riesz projections give a holomorphic family of projections $p(z, t) := \frac{1}{2\pi i}\int_C (\lambda 1 - D(z,t))^{-1}d \lambda$ parametrized by $U_a$.
The images of this family is $\tilde{E}_{\GP}|_{U_a} \oplus \underline{V}^{\oplus k}$, and is a holomorphic vector bundle on $U_a$.
Thus $\tilde{E}_{\GP}|_{U_a}$ is a holomorphic vector bundle.
\end{proof}
\begin{lemma}\label{lem11}
Let $z$ be the complex coordinate of $U_a$, where $U_a$ is considered as a subspace of $\C$. Then $[\tilde{E}_{\GP} |_{U_a}, \underline{V}^{\oplus k} ; g]$, the element of the $K$-group $K^0_\cpt (U_a)$, is expressed as, $[\tilde{E}_{\GP} |_{U_a}, \underline{V}^{\oplus k} ; g] = [\underline{\C}, \underline{\C} ; z^{r_a}]$.
\end{lemma}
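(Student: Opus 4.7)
The plan is to use the complex structure on $U_a$ inherited from the embedding $U_a \subset \C \times \{t_i\} \cong \C$ together with the holomorphicity of $\tilde E_{\GP}|_{U_a}$ established in Lemma~\ref{lem10}. First I would produce a holomorphic trivialization $\tilde E_{\GP}|_{U_a} \cong \underline{\C}^{kN}$, which is possible since $U_a$ is a disk in $\C$ and every holomorphic vector bundle on a disk is trivial. Through such a trivialization, the bundle homomorphism $g$ becomes a holomorphic matrix-valued function $\tilde g \colon U_a \to \End_\C(\C^{kN})$, invertible off the single point $u_a$. By Lemma~\ref{lem7}, $\Ker \tilde g(z) \cong \Ker(H^{\#}(t_i) - z)$, and since $H^\flat(z, t_i)$ is a finite-rank perturbation of $H^\#(t_i) - z$ with cokernel identified to $V^{\oplus k}$ by Lemma~\ref{lem3}, the function $\det \tilde g(z)$ differs from $\det\bigl((H^\#(t_i) - z)|_W\bigr)$, for a suitable finite-dimensional $H^\#(t_i)$-invariant subspace $W$ containing the eigenspace at $\tilde u_a$, only by a nowhere-vanishing holomorphic factor on a neighborhood of $\tilde u_a$.

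Next I would identify the order of vanishing of $\det \tilde g$ at $u_a$ with the crossing multiplicity $r_a$. Since $H^\#(t_i)$ is self-adjoint, the algebraic and geometric multiplicities of the eigenvalue $\tilde u_a$ coincide; taking $W$ to be the image of the Riesz projection of $H^\#(t_i)$ associated to a small contour around $\tilde u_a$, the characteristic polynomial of $(H^\#(t_i) - z)|_W$ is $(z - \tilde u_a)^{r_a}$ up to a nonzero scalar, where $r_a = \dim \Ker(H^\#(t_i) - \tilde u_a)$ matches the convention for crossing multiplicity in Sect.~\ref{sec:4}. Combining with the previous paragraph, $\det \tilde g(z)$ vanishes to order exactly $r_a$ at $u_a$.

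The last step is a standard $K$-theoretic reduction on the disk. On $U_a$, any endomorphism of $\underline{\C}^{kN}$ whose determinant vanishes to order $r_a$ at $u_a$ and is invertible elsewhere can be brought, via a homotopy within such endomorphisms (row and column operations through invertible matrices), to the diagonal form $\mathrm{diag}(z^{r_a}, 1, \ldots, 1)$. Since triples of the form $[\underline{\C}, \underline{\C}; 1]$ vanish in $K^0_\cpt(U_a)$ and direct sums correspond to addition, we obtain $[\tilde E_{\GP}|_{U_a}, \underline{V}^{\oplus k}; g] = [\underline{\C}, \underline{\C}; z^{r_a}]$. The main obstacle will be making the second step rigorous: one must verify carefully that the identifications of Lemmas~\ref{lem3} and~\ref{lem7}, which compare $g$ with the self-adjoint family $H^\#(t_i) - z$, actually preserve the order of vanishing of the relevant determinant, and that the notion of ``multiplicity'' of a crossing used in Sect.~\ref{sec:4} is indeed the kernel dimension on the self-adjoint side rather than some more refined invariant.
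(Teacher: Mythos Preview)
Your approach is essentially the same as the paper's: both proofs (i) use the holomorphicity from Lemma~\ref{lem10}, (ii) reduce to the determinant and identify the order of vanishing of $\det g$ at $u_a$ as $r_a$, and (iii) conclude the $K$-class equals $[\underline{\C},\underline{\C};z^{r_a}]$. The paper is simply terser: it passes directly to $\det g \colon \det \tilde E_{\GP}|_{U_a}\to\underline{\C}$, trivializes the line bundle $\det \tilde E_{\GP}|_{U_a}$ holomorphically, writes $\det g = z^{r_a}h(z)$ with $h$ nonvanishing, and invokes the standard identity $[E,F;g]=[\det E,\det F;\det g]$ in $K^0_{\cpt}$ in one line. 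Your trivialization of the full bundle followed by row/column operations to $\mathrm{diag}(z^{r_a},1,\ldots,1)$ is a hands-on unpacking of exactly that determinant identity.

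Where you add value is step (ii): the paper \emph{asserts} that $\det g$ vanishes to order $r_a$ without comment, and you correctly recognize this needs the self-adjointness of $H^\#(t_i)$ (so that algebraic and geometric multiplicities of the eigenvalue $\tilde u_a$ agree). Your instinct there is right and fills a genuine ellipsis in the paper. That said, your first-paragraph route to it --- comparing $\det\tilde g$ to $\det\bigl((H^\#(t_i)-z)|_W\bigr)$ up to a nowhere-vanishing factor via Lemma~\ref{lem3} --- is more circuitous than necessary and, as written, not obviously justified (the domain $(E_{\GP})_{z,t_i}$ moves with $z$, so the comparison with a map on a \emph{fixed} $W$ needs an explicit intertwiner). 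A cleaner way: for $z$ on a small circle about $u_a$, the winding number of $\det\tilde g$ equals the number of $z$ inside the circle (with multiplicity) for which $g_{z,t_i}$ fails to be invertible; by Lemma~\ref{lem7} this is the number of eigenvalues of the self-adjoint operator $H^\#(t_i)$ inside the circle, namely $r_a$. This avoids the determinant comparison altogether and addresses the concern you flag in your final sentence.
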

\begin{proof}
Since $\tilde{E}_{\GP}|_{U_a}$ is a holomorphic vector bundle, $g \colon \tilde{E}_{\GP} |_{U_a} \rightarrow \underline{V}^{\oplus k}$ is a holomorphic bundle map, and $\det g \colon \det E_{\GP}|_{U_a} \rightarrow \underline{\C}$ also is holomorphic.
We take a holomorphic trivialization of $\det \tilde{E}_{\GP}|_{U_a}$, then $\det g$ is holomorphic on $U_a$ which is zero at $u_a$ whose order is $r_a$. By the Taylor series, $\det g$ is expressed as $z^{r_a}h(z)$ where $h(z)$ is a nowhere vanishing function on $U_a$.
Thus we have,
\begin{equation*}
	[\tilde{E}_{\GP}|_{U_a}, \underline{V}^{\oplus k} ; g] = [\det \tilde{E}_{\GP} |_{U_a}, \underline{\C} ; \det g] = [\underline{\C}, \underline{\C} ; z^{r_a}h(z)] = [\underline{\C}, \underline{\C} ; z^{r_a}].
\end{equation*}
\end{proof}

Note that $U_a$ is considered as a subspace of two spaces. One is $\C \times \{ t \}$, and the other is $\tilde{\T}$. We have two induced spin$^c$ structures on $U_a$, and they can be different.
\begin{Proof}{\bf of Lemma \ref{lem9}}
If the assumption of ($1$) holds, these two orientations are the same, and in the case ($2$), they are the opposite.
Thus, by Lemma~\ref{lem11}, we have,
\begin{equation*}
\ind_{U_a, \tilde{\T}}([\tilde{E}_{\GP} |_{U_a}, \underline{V}^{\oplus k} ; g]) =
\left\{
\begin{aligned}
+ r_a, & \hspace{3mm} \text{if ($1$) holds,}   \\
- r_a, & \hspace{3mm} \text{if ($2$) holds.}
\end{aligned}
\right.
\end{equation*}
\end{Proof}
\begin{Proof}{\bf of Proposition~\ref{prop2}}
By Lemma~\ref{lem9} and the definition of the $\Z$-valued spectral flow (Definition~\ref{sf}), we have $\sum_{i=1}^m \ind_{U_a, \tilde{\T}}([\tilde{E}_{\GP} |_{U_a}, \underline{V}^{\oplus k} ; g]) = \mathrm{sf}(H^\#(t) - \mu)$. Thus by Lemma~\ref{lem8} and Eq. (\ref{eq1}), we have
\begin{gather*}
\I_\GP =  \ind_{\tilde{\T}}([\tilde{E}_{\GP} |_{\tilde{\T}}, \underline{V}^{\oplus k} ; g]) = \sum_{a=1}^m \ind_{U_a, \tilde{\T}}([\tilde{E}_{\GP} |_{U_a}, \underline{V}^{\oplus k} ; g])\\ = \mathrm{sf}(H^\#(t) - \mu) = -\I_{\Edge}.
\end{gather*}
\end{Proof}
\begin{Proof}{\bf of Theorem~\ref{mainthm}}
By Proposition~\ref{prop1} and Proposition~\ref{prop2}, we have
$\I_\Bulk = - \I_\GP = \I_\Edge$.
This completes the proof of Theorem~\ref{mainthm}.
\end{Proof}

\begin{remark}\label{nondegeneracy}
Although our cobordism argument follows that of Graf--Porta's, there is a little difference in the following sense.
Graf--Porta extended the Bloch bundle to some neighborhood of $\SSS^1_\eta \times \T$ and relate wave functions which make this bundle and $E_\GP$ through the eigenfunction of $H(\eta, t) - z$.
Relating these functions is not so easy in general, and so Graf--Porta assumed some non-degeneracy of the band in their proof (especially Lemma $5.8$ of \cite{GP13}).
In our proof, we treat wave functions rather indirectly.
Since the operation of taking the Fermi projection and the family index are understood $K$-theoretically (they are used in the proof of the Bott periodicity \cite{AB64,At68}), we defined elements $\alpha_z$ and $\alpha_\eta$ of $K$-groups and showed that they map to $[E_\B]$ (Lemma \ref{lem4}) and $-[E_\GP] + [\underline{V}^{\oplus k}]$ (Lemma \ref{lem5}), respectively.
Then we construct the element $\alpha$ by using $H(\eta, t) -z$ itself and gives a cobordism between $\alpha_z$ and $\alpha_\eta$.
In this way, we avoid to treat wave functions directly, and so we do not need to assume some non-degenracy of the band.
\end{remark}
%%%%%%%%%%%%%%%%%%%%%%%%%%%%%%%%%%%%%%%%%%%%%%%%%%%
\section{Type $\AII$ : 2d topological insulator}
\label{sec:7}
In this section, we give a proof of the bulk-edge correspondence for the two-dimensional quantum spin Hall system along the same lines as above. 
\subsection{Setup, bulk index, edge index and bulk-edge correspondence}
On the unit circle $\T$, we consider an involution $\tau_\T \colon \T \rightarrow \T$ induced by the complex conjugation on $\C$.
$(\T, \tau_\T)$ is an involutive space.
Let $V$ be a finite dimensional Hermitian vector space. We denote the complex dimension of $V$ by $N$.
We assume that we have an anti-linear map $\Theta \colon V \rightarrow V$ which satisfies $\Theta \Theta^* = 1$ and $\Theta^2 = -1$.\footnote{Note that $\Theta^*$ is the adjoint of the anti-linear map $\Theta$ which satisfies $\langle \Theta x, y \rangle_V = \overline{\langle x, \Theta^* y \rangle_V}$ for any $x, y \in V$.}
In this case, $N$ is even.
$\Theta$ induces an anti-linear map on $l^2(\Z; V)$ by $\{ \varphi_n\}_{n \in \Z} \mapsto \{ \Theta \varphi_n \}_{n \in \Z}$. We use the same symbol $\Theta$ for this map. Note that $\Theta$ commutes with $P_{\geq k}$ and acts on subspaces $l^2(\Z_{\geq k}; V)$ for $k \in \Z$.
We define an involution $\tau_\Theta$ on $\End_\C(V)$ by $G \mapsto \Theta G \Theta^*$.
The pair $(\End_\C(V), \tau_\Theta)$ is an involutive space.
We also write $\tau_\Theta$ for the involution on $B(l^2(\Z; V))$ induced by $\Theta$ in the same way.

We consider a bulk Hamiltonian of Definition \ref{def3.1} (self-adjointness is assumed).
We assume further that the map $H(t) \colon (\T, \tau_\T) \rightarrow (B(l^2(\Z; V)), \tau_\Theta)$ is $\Z_2$-equivariant, that is,
$H(\tau_\T(t)) = \Theta H(t)\Theta^*$.
In other words, our bulk Hamiltonian satisfies the {\em (odd) time-reversal symmetry}.
In this case, the spectrum of our Hamiltonian satisfies $\mathrm{sp}(H(t)) = \mathrm{sp}(H(\tau_\T(t)))$ for any $t \in \T$.
We assume the spectral gap condition and take a real number $\mu$ as in Sect.~\ref{sec:3}.
We also take a simple closed loop $\gamma$ in the complex plane as in Sect.~\ref{sec:3}.
We here take $\gamma$ to be reflection symmetric, $\gamma = \bar{\gamma}$.
Let $\tau_\gamma$ be the involution on $\gamma$ induced by the complex conjugation.
%%%%%%%%%%%%%%%%%%%%%%%%%%%%%%%%%%%%%%%%%%%%%%%%%%%%%%%%%%%%%%%%%
By the Fourier transform, we obtain a continuous family of automorphisms $\{H(\eta, t) - z\}_{\SSS_\eta^{1,1} \times \gamma \times \T}$ which satisfies the following property.
\begin{lemma}\label{equivalence}
The map $\SSS_\eta^{1,1} \times (\gamma, \tau_\gamma) \times (\T, \tau_\T) \rightarrow (\End_\C(V), \tau_\Theta)$ given  by $(\eta, z, t) \mapsto H(\eta, t) - z$ is a $\Z_2$-equivariant map.
\end{lemma}
We consider the Bloch bundle $E_\B$.
By Lemma \ref{equivalence}, $\Theta$ acts on $E_\B$ and we have the Quaternionic vector bundle $(E_\B, \Theta)$ over $\SSS_\eta^{1,1} \times (\T, \tau_\T)$.
The Bloch bundle defines an element $[(E_\B, \Theta)]$ of the $KSp$-group $KSp^0(\SSS_\eta^{1,1} \times (\T, \tau_\T))$.
We next construct a map from this $KSp$-group to $\Z_2$ as follows.
We consider the $\Z_2$-equivariant embedding $m \colon \SSS_\eta^{1,1} \times (\T, \tau_\T) \hookrightarrow \R^{1,1} \times \R^{1,1} \times \R^{2,0}$ defined by $(\eta, t) \mapsto (\eta, t, 0)$. The normal bundle of this embedding is a trivial bundle whose fiber is $\R^{2,2}$. Thus we have $m^{\AII}_! \colon KSp^0(\SSS_\eta^{1,1} \times (\T, \tau_\T)) \rightarrow KSp^0_\cpt(\R^{1,1} \times \R^{1,1} \times \R^{2,0})$.
Since $KSp^0_\cpt(\R^{1,1} \times \R^{1,1} \times \R^{2,0}) \cong \Z_2$, we have the following map.
\begin{equation*}
	\ind^{\AII}_{\SSS^1_\eta \times \T} \colon KSp^0(\SSS_\eta^{1,1} \times (\T, \tau_\T)) \rightarrow \Z_2.
\end{equation*}
An easy calculation shows that the $KSp$-group $KSp^0(\SSS_\eta^{1,1} \times (\T, \tau_\T))$ is isomorphic to $\Z \oplus \Z_2$, where the $\Z$ summand corresponds to the class of product bundles.
It is easily checked that $\ind^{\AII}_{\SSS^1_\eta \times \T}$ maps the class of product bundles to $0$.
\begin{definition}[Bulk index]
We define the {\em bulk index} of our system by
\begin{equation*}
	\I_{\Bulk}^{\AII} := \ind^{\AII}_{\SSS^1_\eta \times \T}([(E_\B, \Theta)]) \in \Z_2.
\end{equation*}
\end{definition}
%%%%%%%%%%%%%%%%%%%%%%%%%%%%%%%%%%%%%%%%%%%%%%%%%%%%%%%%%

We next consider the edge Hamiltonian $H^\#(t)$.
Since $\Theta$ commutes with $P_{\geq k}$,
we have a continuous family of self-adjoint Fredholm operators $\{ H^\#(t) - \mu \}_{t \in \T}$ which gives a $\Z_2$-equivariant map $(\T, \tau_\T) \rightarrow (\Fred(l^2(\Z_{\geq 0} ; V)^{s.a.}, \tau_\Theta)$.
The spectrum of our edge Hamiltonian also has the symmetry with respect to $\tau_\T$ (see also Figure $3$).

\begin{figure}
  \centering
  \includegraphics[width=84mm,clip]{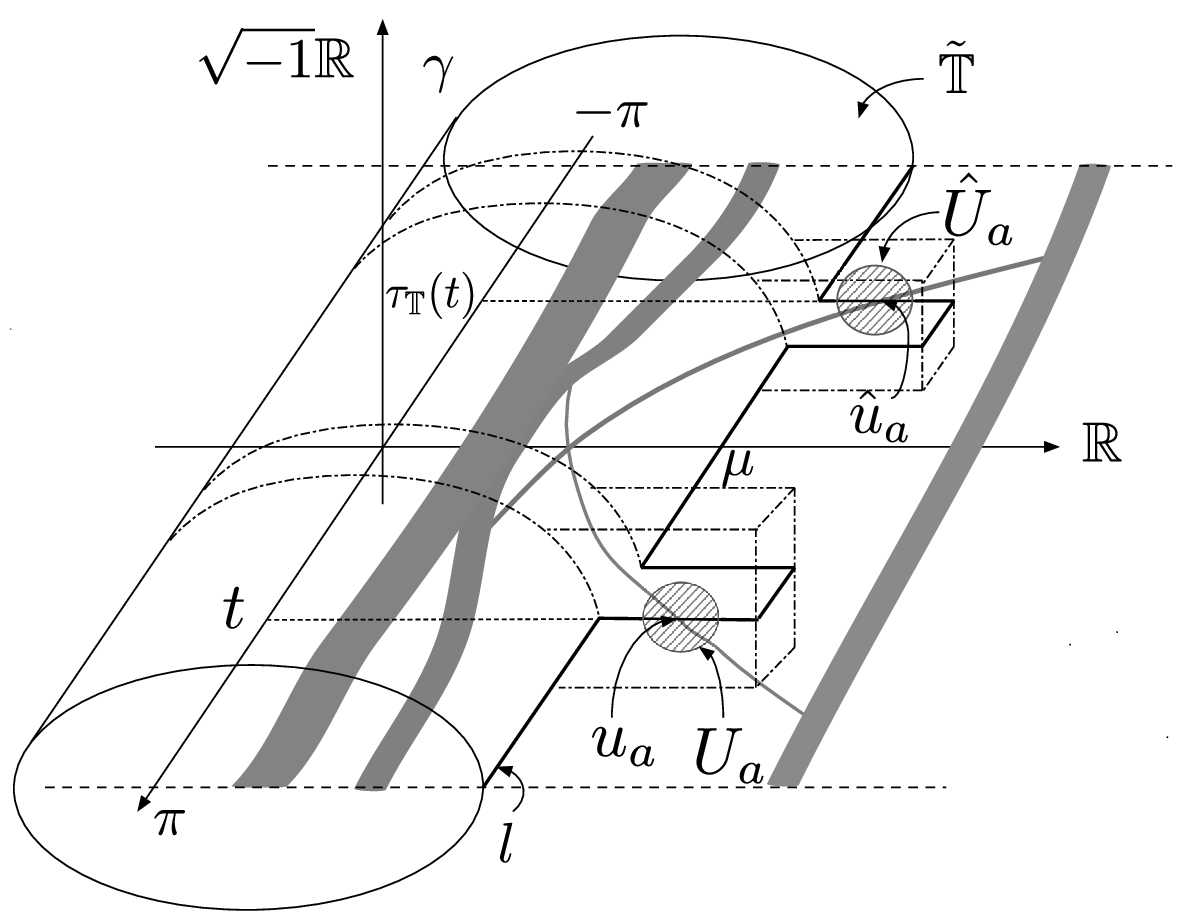}
  	\label{GPFig3}
  \caption{Under the time-reversal symmetry, the spectrum of edge Hamiltonians (gray area and gray line) are symmetric with respect to the involution $\tau_\T$ on $\T$. The loop $l$ and the deformed torus $\tilde{\T}$ are taken in such a symmetric way. Thus crossing points appear as a pair $(u_a, \hat{u}_a)$. We take neighborhoods $U_a$ and $\hat{U}_a$ in $\tilde{\T}$ of these points in such a symmetric way.}
\end{figure}

\begin{definition}[Edge index]
We define the {\em edge index} of our system by
\begin{equation*}
	\I_{\Edge}^{\AII} := \qsf(\{ H^\#(t) - \mu \}_{t \in \T}) \in \Z_2.
\end{equation*}
\end{definition}
We now revisit the definition of the $\Z_2$-valued spectral flow (Definition \ref{qsf}).
In order to define $\qsf(\{ H^\#(t) - \mu \}_{t \in \T})$, we choose $t_i$ and $c_i$.
We now consider a path $l \in \C \times [0, \pi]$ defined as in the Sect.~\ref{sec:4}.
Then $\tilde{l} = l \cup (\tau \times \tau_\T)(l)$ gives a loop in $\C \times \T$.
By using this loop $\tilde{l}$, we consider a deformed torus $\tilde{\T}$ as in Sect.~\ref{subsec:6.2} in such a way that $\tilde{\T}$ is closed under the involution.
$\qsf(\{ H^\#(t) - \mu \}_{t \in \T})$ is defined by counting the number of crossing points $\{ u_1, \cdots, u_m \}$ of the spectrum of the edge Hamiltonian and the path $l$ with multiplicity, and taking mod $2$.
If we consider the path $(\tau \times \tau_\T)(l)$, there also are crossing points $\{ \hat{u}_1, \cdots, \hat{u}_m \}$ in such a symmetric way with respect to the involution (see Figure $3$).
Thus we can say that the $\Z_2$-valued spectral flow $\qsf(\{ H^\#(t) - \mu \}_{t \in \T})$ is defined by counting the {\em pair} of crossing points $(u_a, \hat{u}_a)$ with multiplicity and taking mod $2$.
We take this point of view in Proposition \ref{prop2AII}.

%%%%%%%%%%%%%%%%%%%%%%%%%%%%%%%%%%%%%%%%%%%%%%%%%%%%%
The following is the main theorem of this subsection, the bulk-edge correspondence for two-dimensional type $\AII$ topological insulators.
\begin{theorem}[Bulk-edge correspondence for quantum spin Hall systems]\label{2dTypeAII}
The bulk index coincides with the edge index. That is,
$\I_{\Bulk}^{\AII} = \I_{\Edge}^{\AII}$.
\end{theorem}
\begin{remark}
For a two-dimensional quantum spin Hall system, our $\Z_2$-valued bulk index coincides with the Kane-Mele invariant \cite{KM05b} and its extension obtained by Katsura-Koma \cite{KK16}.
If we assume an extra symmetry, we can compute the $\Z_2$-invariant as the parity of the spin-Chern number \cite{SB15}.
\end{remark}
%%%%%%%%%%%%%%%%%%%%%%%%%%%%%%%%%%%%%%%%%%%%%%%%%%%
\subsection{Proof of the bulk-edge correspondence}
As in the case of type A systems, we consider decaying solutions in order to prove Theorem \ref{2dTypeAII}.
By Lemma \ref{lem2} and Lemma \ref{lem3}, we can define $E_{\GP}$ as the kernels of a family of Fredholm operators.
Since our Hamiltonian satisfies time-reversal symmetry and the operator $\Theta$ acts on $l^2(\Z; V)$ point-wise, $\Theta$ acts on the bundle $E_{\GP}$.
Thus the pair $(E_{\GP}, \Theta)$ is a Quaternionic vector bundle over the involutive space $(\gamma, \tau_\gamma) \times (\T, \tau_\T)$.
Note that the cokernels $\underline{V}^{\oplus k}$ also has a Quaternionic vector bundle structure over $(\gamma, \tau_\gamma) \times (\T, \tau_\T)$  in a natural way.
Thus $(E_{\GP}, \Theta)$ defines an element of $KSp^0((\gamma, \tau_\gamma) \times (\T, \tau_\T))$.
We now construct a map from $KSp^0((\gamma, \tau_\gamma) \times (\T, \tau_\T))$ to $\Z_2$ as follows.
We consider an $\Z_2$-equivariant embedding $m' \colon (\gamma, \tau_\gamma) \times (\T, \tau_\T) \hookrightarrow \R^{1,1} \times \R^{1,1} \times \R^{2,0}$ defined by $(z, t) \mapsto (z, t, 0)$. The normal bundle of this embedding is a trivial bundle whose fiber is $\R^{2,2}$. Thus we have, ${m'}^{\AII}_! \colon KSp^0((\gamma, \tau_\gamma) \times (\T, \tau_\T)) \rightarrow KSp^0_\cpt((\gamma, \tau_\gamma) \times (\T, \tau_\T) \times \R^{2,2})$.
Since $KSp^0_\cpt(\R^{1,1} \times \R^{1,1} \times \R^{2,0}) \cong \Z_2$, we have
\begin{equation*}
	\ind^{\AII}_{\gamma \times \T} \colon KSp^0((\gamma, \tau_\gamma) \times (\T, \tau_\T)) \rightarrow \Z_2,
\end{equation*}
defined by the composition of these maps.
\begin{definition}[Graf--Porta \cite{GP13}]
Following Graf--Porta's idea, we introduce another invariant of our system by,
\begin{equation*}
	\I_{\GP}^{\AII} := \ind^{\AII}_{\gamma \times \T}([(E_{\GP}, \Theta)]) \in \Z_2.
\end{equation*}
\end{definition}
%%%%%%%%%%%%%%%%%%%%%%%%%%%%%%%%%%%%%%%%%%%%%%%%%%%%%%%
As in the type A case, the proof of Theorem \ref{2dTypeAII} is divided into two parts.
We first show that the bulk index coincides with $\I_{\GP}^{\AII}$ by using the cobordism invariance of the index (Proposition \ref{prop1AII}).
We next show that $\I_{\GP}^{\AII}$ equals to the edge index by the localization of the $KSp$-class (Proposition \ref{prop2AII}).
\begin{proposition}\label{prop1AII}
$\I_{\Bulk}^{\AII} = \I_{\GP}^{\AII}.$
\end{proposition}
We consider the space $X$ and $Y$ considered in the case of type A systems with involutions.
Let $(X, \tau_X) := (\D^{1,1}_\eta \times \D^{1,1}_z) \setminus (\SSS^{1,1}_\eta \times (\gamma, \tau_\gamma)) \times (\T, \tau_\T)$
and $(Y, \tau_Y) := \SSS^{1,1}_\eta \times (\D^{1,1}_z \setminus (\gamma, \tau_\gamma)) \times (\T, \tau_\T) \sqcup (\D^{1,1}_\eta \setminus \SSS^{1,1}_\eta) \times (\gamma, \tau_\gamma) \times (\T, \tau_\T)$.
Take an extension of $H(\eta, t) - z$ as in the case of type A, and we obtain elements
\begin{equation*}
	\alpha^{\AII}_z \in KSp^0_\cpt(\SSS^{1,1}_\eta \times (\D^{1,1}_z \setminus (\gamma, \tau_\gamma)) \times (\T, \tau_\T)),
\end{equation*}
\begin{equation*}
	\alpha^{\AII}_\eta \in KSp^0_\cpt((\D^{1,1}_\eta \setminus \SSS^{1,1}_\eta) \times (\gamma, \tau_\gamma) \times (\T, \tau_\T)).
\end{equation*}
Let $F_{\AII}$ be the composition of the inverse of the Bott periodicity isomorphism
\begin{equation*}
(\beta^{\AII}_z)^{-1} \colon KSp^0_\cpt(\SSS^{1,1}_\eta \times (\D^{1,1}_z \setminus (\gamma, \tau_\gamma)) \times (\T, \tau_\T))
	\rightarrow
		KSp^0(\SSS^{1,1}_\eta \times (\T, \tau_\T)),
\end{equation*}
and $\ind^{\AII}_{\SSS^1_\eta \times \T} \colon KSp^0(\SSS^{1,1}_\eta \times (\T, \tau_\T)) \rightarrow \Z_2$.
The relation between $\alpha^{\AII}_z$ and $\I^{\AII}_\Bulk$ is stated as follows, which follows easily from Atiyah's proof of the Bott periodicity theorem for $KR$-theory \cite{At66}.
\begin{lemma}\label{AIIlem4}
	$(\beta^{\AII}_z)^{-1}\alpha^{\AII}_z = [(E_\B, \Theta)]$.
Thus we have
	$F_{\AII}(\alpha^{\AII}_z) = \I^{\AII}_\Bulk$.
\end{lemma}
Let $F'_{\AII}$ be the composition of the inverse of the Bott periodicity isomorphism,
\begin{equation*}
(\beta^{\AII}_\eta)^{-1} \colon KSp^0_\cpt((\D^{1,1}_\eta \setminus \SSS^{1,1}_\eta) \times (\gamma, \tau_\gamma) \times (\T, \tau_\T)),
	\rightarrow
		KSp^0((\gamma, \tau_\gamma) \times (\T, \tau_\T))
\end{equation*}
and
$\ind^{\AII}_{\gamma \times \T} \colon KSp^0((\gamma, \tau_\gamma) \times (\T, \tau_\T)) \rightarrow \Z_2$.
The relation between $\alpha^{\AII}_\eta$ and $\I^{\AII}_\GP$ is stated as follows.
\begin{lemma}\label{AIIlem5}
	$(\beta^{\AII}_\eta)^{-1}\alpha_\eta =[(E_{\GP}, \Theta)] + [(\underline{V}^{\oplus k}, \Theta)]$.
Thus we have
	$F'_{\AII}(\alpha^{\AII}_\eta) = \I^{\AII}_\GP$.
\end{lemma}
\begin{proof}
As in Lemma \ref{lem5}, the proof is given by considering the following commutative diagram \cite{At66,At68}.
\[\xymatrix
@!C=165pt
{
KSp^{-1}(\SSS^{1,1}_\eta \times(\gamma, \tau_\gamma) \times (\T, \tau_\T))
	\ar[r]^{\partial \hspace{5mm}}
		\ar[d]_{\mathcal{T}} & KSp^0((\D^{1,1}_\eta, \SSS^{1,1}_\eta) \times (\gamma, \tau_\gamma) \times (\T, \tau_\T)) \ar[d]^{(\beta^{\AII}_\eta)^{-1}}\\
[(\gamma, \tau_\gamma) \times (\T, \tau_\T), (\Fred(l^2(\Z_{\geq 0}; \C^2), \tau_{\Theta})] \ar[r]^{\hspace{1cm} \mathop{\mathrm{index}}}& KSp^0((\gamma, \tau_\gamma) \times (\T, \tau_\T)).\\
}\]
The map index is given by a family index as explained in section $2.1.2$.
\end{proof}
\begin{lemma}\label{AIIlem6}
$(F_{\AII} \oplus F'_{\AII})(\alpha^{\AII}_z, \alpha^{\AII}_\eta) = 0$.
\end{lemma}
\begin{proof}
We first construct an explicit $\Z_2$-equivariant embedding $E^{\AII} \colon X \rightarrow \R^{6,4} \times ([0, +\infty), \id)$ which maps the boundary to the boundary and whose normal bundle is a trivial bundle of fiber $\R^{3,3}$.
Let $\theta \colon (\D^{1,1}_\eta \times \D^{1,1}_z) \setminus (\SSS^{1,1}_\eta \times (\gamma, \tau_\gamma)) \rightarrow \R^{0,1}$ be a function given by
$\theta(\eta, z) = \mathrm{arg}((1-|\eta|) + \sqrt{-1}(1-|z|))$, where $\arg(x+\sqrt{-1}y)$ is the argument of $x+\sqrt{-1}y$.
We here consider $\D^{1,1}_\eta$ and $\D^{1,1}_z$ as a unit disc in the complex plane and $|\cdot|$ is the absolute value in the complex plane.
By using this, we define a map $f \colon (\D^{1,1}_\eta \times \D^{1,1}_z) \setminus (\SSS^{1,1}_\eta \times (\gamma, \tau_\gamma)) \rightarrow \R^{0,1} \times ([0, +\infty), \id)$ by $f(\eta, z) = (\cos(2\theta(\eta, z)), \sin(2\theta(\eta,z)))$, where $\R^{0,1} \times ([0, +\infty), \id)$ is identified with the upper half-plane in $\R^{0,2}$.
$\theta$ and $f$ are $\Z_2$-equivariant maps.
Let $e$ be the map
\begin{gather*}
	e \colon (X, \tau_X)= (\D^{1,1}_\eta \times \D^{1,1}_z) \setminus (\SSS^{1,1}_\eta \times (\gamma, \tau_\gamma)) \times (\T, \tau_\T) \hspace{3cm}\\
		\hspace{2cm} \longrightarrow \R^{1,1} \times \R^{1,1} \times (\R^{0,1} \times ([0, +\infty), \id)) \times \R^{1,1} = \R^{3,4} \times ([0, +\infty), \id),
\end{gather*}
given by
$e(\eta, z, t) = (\eta, z, f(\eta, z), t)$.
The map $e$ is a $\Z_2$-equivariant neat embedding which maps the boundary to the boundary whose normal bundle is a trivial bundle of fiber $\R^{0,3}$.
By using this, we define an embedding $E^{\AII} \colon (X, \tau_X) \rightarrow (\R^{3,4} \times ([0, +\infty), \id)) \times \R^{3,0}$ by $E^{\AII}(\eta, z, t) = (e(\eta, z, t), 0)$.
This map $E^{\AII}$ has the desired properties.
%%%%%%%%%%%%%%%%%%%%%%%%%%%%%%%%%%%%%%%%%%%%%%%%%%%%%%%%%%
Let $i' \colon \R^{6,4} \hookrightarrow \R^{6,4} \times ([0, +\infty), \id)$ be an inclusion given by $i'(x) = (x, 0)$. We have the following commutative diagram.
\[\xymatrix{
KSp^0_\cpt(X, \tau_X) \ar[rr]^{i^*} \ar[d]^{(E^{\AII})_!}& & KSp^0_\cpt(Y, \tau_Y) \ar[rr]^{\hspace{7mm} F_{\AII} \oplus F'_{\AII}} \ar[d]_{(E^{\AII}|Y)_!} & &\Z_2 \\
KSp^0_\cpt(\R^{6,4} \times ([0, +\infty), \id)) \ar[rr]^{i'^*} & & KSp^0_\cpt(\R^{6,4})  \ar[rru]_{(\beta^{\AII})^{-1}} & &
}\]
We have $i^*(\alpha^{\AII}) = (\alpha^{\AII}_z, \alpha^{\AII}_\eta)$ and $KSp^0_\cpt(\R^{6,4} \times ([0, +\infty), \id)) = 0$ since the one point compactification of $\R^{6,4} \times ([0, +\infty), \id)$ is a $11$ dimensional closed ball with some involution, which is $\Z_2$-equivariant contractible.
By the commutativity of this diagram, we have $(F_{\AII} \oplus F'_{\AII})(\alpha^{\AII}_z, \alpha^{\AII}_\eta) = 0$.
\end{proof}
\begin{Proof}{\bf of Proposition~\ref{prop1AII}}
By Lemma~\ref{AIIlem4}, Lemma~\ref{AIIlem5} and Lemma~\ref{AIIlem6}, we have
$0 = (F_{\AII} \oplus F'_{\AII})(\alpha^{\AII}_z, \alpha^{\AII}_\eta) = F_{\AII}(\alpha^{\AII}_z) + F'_{\AII}(\alpha^{\AII}_\eta) = \I^{\AII}_\Bulk + \I^{\AII}_\GP$.
\end{Proof}
%%%%%%%%%%%%%%%%%%%%%%%%%%%%%%%%%%%%%%%%%%%%%%%%%%%%%%
\begin{proposition}\label{prop2AII}
$\I_{\GP}^{\AII}=\I_{\Edge}^{\AII}$.
\end{proposition}
This Proposition follows by the localization of the $KSp$-class by using the Dirichlet boundary condition as in the case of type A (Proposition \ref{prop2}).
We take a deformed torus $\tilde{\T}$ such that $\tilde{\T}$ is closed under the involution of $(\C, \tau) \times \R^{1,1}$.
We write $\tau_{\tilde{\T}}$ for the involution on $\tilde{\T}$ induced by this involution.
Then we define a map $\ind_{\tilde{\T}}^{\AII} \colon KSp^0(\tilde{\T}, \tau_{\tilde{\T}}) \rightarrow \Z_2$ in the same way.
The bundle homomorphism $g \colon (\tilde{E}_{\GP}|_{\tilde{\T}}, \Theta) \rightarrow (\underline{V}^{\oplus k}, \Theta)$ is a $\Z_2$-equivariant homomorphism.
The following statement holds which corresponds to Lemma \ref{lem8}.
\begin{lemma}\label{AIIlem8}
$\I_{\GP}^{\AII} = \ind_{\tilde{\T}}^{\AII}([(\tilde{E}_{\GP}|_{\tilde{\T}}, \Theta), (\underline{V}^{\oplus k}, \Theta); g])$.
\end{lemma}
The support of $g$ corresponds to crossing points. By using excision property, we obtain the following equation which corresponds to Eq. (\ref{eq1}).
\begin{gather}
\ind_{\tilde{\T}}^{\AII}([(\tilde{E}_{\GP}|_{\tilde{\T}}, \Theta), (\underline{V}^{\oplus k}\hspace{-0.5mm}, \Theta); g]) \hspace{4cm} \nonumber
	\\
	\hspace{1cm}=	\sum_{a=1}^{m} \ind_{U_a \cup \hat{U}_a}^{\AII}\hspace{-0.5mm}([(\tilde{E}_{\GP}|_{U_a \cup \hat{U}_a}\hspace{-0.5mm}, \Theta), (\underline{V}^{\oplus k}\hspace{-0.5mm}, \Theta); g]).\label{eq1AII}
\end{gather}
Now the problem is how to count the contribution of each crossing point to $\I_{\GP}^{\AII}$ by using $KSp$-theory. 
In the type $\AII$ case, we have the following Lemma which corresponds to Lemma \ref{lem9}.
Note that since the spectrum of edge Hamiltonians (and our choice of loop $l$ and deformed torus $\tilde{\T}$) are symmetric with respect to the involution on $\T$, each crossing point appears as a pair.
\begin{figure}
  \centering
  \includegraphics[width=110mm,clip]{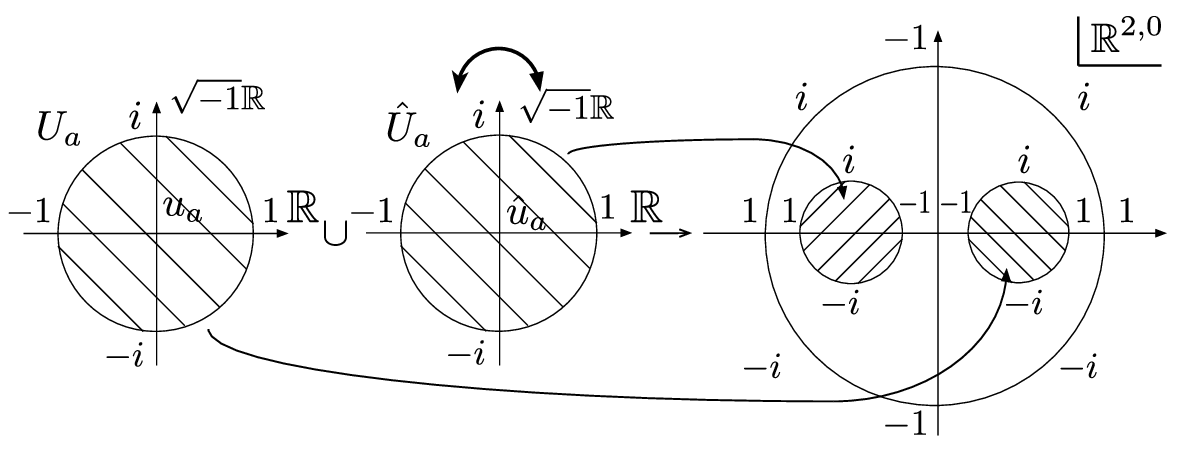}
  \caption{The embedding $U_a \cup \hat{U}_a \rightarrow \R^{2,0}$ is indicated at this figure. An example of the value of the determinant of the clutching function and that of the one on $\SSS^{2,0}$ is indicated by the numbers $1, i, -1, -i$. We here consider the case $r_a =1$}
\end{figure}
\begin{lemma}\label{AIIlem9}
For each pair of crossing points $u_a$ and $\hat{u}_a$, the following holds.
If the multiplicity of the crossing point $u_a$ is $r_a$, then
$$\ind_{U_a \cup \hat{U}_a}^{\AII}([(\tilde{E}_{\GP}|_{U_a \cup \hat{U}_a}, \Theta), (\underline{V}^{\oplus k}, \Theta); g]) = r_a \mod 2.$$
\end{lemma}
\begin{proof}
At each pair of crossing points $u_a$ and $\hat{u}_a$, we take neighborhoods $U_a$ and $\hat{U}_a$ in $\tilde{\T}$ of these points such that they maps to each other by the involution on the deformed torus $\tilde{\T}$.
We then embed the involutive space $U_a \cup \hat{U}_a$ into $\D^{2,0}$ in a $\Z_2$-equivariant way.
More explicitly, we take two small balls in the unit ball in $\D^{2,0}$ in such a way that they maps to each other by the involution on $\R^{2,0}$.
We then consider such an embedding as in Fig.~$4$. We embed $U_a$ into the right ball in $\D^{2,0}$ and $\hat{U}_a$ by first flipping $\hat{U}_a$ around the imaginary axis and embedding into the left ball in $\D^{2,0}$.
By using this embedding, we obtain an homomorphism,
$KSp^0_\cpt(U_a \cup \hat{U}_a, \tau_{\tilde{\T}}) \rightarrow KSp^0(\D^{2,0}, \SSS^{2,0})$.
The (complex) determinant of the clutching function of a representative of $\mathrm{ext}([(\tilde{E}_{\GP}|_{U_a \cup \hat{U}_a}, \Theta), (\underline{V}^{\oplus k}, \Theta); g])$ is $\Z_2$-equivariantly homotopic to $g_{r_a}$ considered in Example \ref{localizationexample}.
Thus by using  the isomorphism $KSp^0(\D^{2,0}, \SSS^{2,0}) \cong \Z_2$ of Proposition \ref{isom},
we have,
\begin{equation*}
\ind_{U_a \cup \hat{U}_a}^{\AII}([(\tilde{E}_{\GP}|_{U_a \cup \hat{U}_a}, \Theta), (\underline{V}^{\oplus k}, \Theta); g])
	= w(g_{r_a})
		=r_a \mod 2.
\end{equation*}
\end{proof}
\begin{Proof}{\bf of Proposition~\ref{prop2AII}}
By Definition \ref{qsf}, Lemma~\ref{AIIlem8}, Lemma~\ref{AIIlem9} and Eq.~(\ref{eq1AII}), we have,
\begin{gather*}
\I_{\GP}^{\AII}
	=
		\ind_{\tilde{\T}}^{\AII}([(\tilde{E}_{\GP}|_{\tilde{\T}}, \Theta), (\underline{V}^{\oplus k}, \Theta); g])\hspace{3cm} \\
	=
		\sum_{a=1}^{m} \ind_{U_a \cup \hat{U}_a}^{\AII}\hspace{-0.5mm}([(\tilde{E}_{\GP}|_{U_a \cup \hat{U}_a}\hspace{-0.5mm}, \Theta), (\underline{V}^{\oplus k}\hspace{-0.5mm}, \Theta); g])
	=
		\qsf(\{ H^\#(t) - \mu \}_{t \in \T}) = \I_{\Edge}^{\AII}.
\end{gather*}
\end{Proof}
\begin{Proof}{\bf of Theorem~\ref{2dTypeAII}}
By proposition~\ref{prop1AII} and Proposition~\ref{prop2AII}, we have
$\I_\Bulk^{\AII} = \I_\GP^{\AII} = \I_\Edge^{\AII}$.
This completes the proof of Theorem~\ref{2dTypeAII}.
\end{Proof}
%%%%%%%%%%%%%%%%%%%%%
\subsection*{Acknowledgements}
This work is part of a Ph.D. thesis, defended at the University of Tokyo in 2017.
The author expresses gratitude to his supervisor Mikio Furuta for his encouragement and for helpful comments and suggestions. In particular, it is pointed out to the author by him the relationship between the bulk-edge correspondence and Atiyah-Bott's elementary proof of the Bott periodicity theorem from the very beginning of this work, and the idea to use the cobordism invariance of the index in the proof of the bulk-edge correspondence is due to him.
This work started from author's collaborative research with Mikio Furuta, Motoko Kotani, Yosuke Kubota, Shinichiroh Matsuo and Koji Sato. He would like to thank them for many stimulating discussions and encouragements.
This work was supported by JSPS KAKENHI Grant Number JP17H06461.

%%%%%%%%%%%%%%%%%%%%%%%%%%%%%%%%%%%%%%%%%%%%%%%%%%%

\end{document}